\documentclass[11pt]{article}

% Add packages here if required
\usepackage[utf8]{inputenc}
\usepackage{
    amsmath,
    amsfonts, 
    amsthm, 
    amssymb,
    authblk,
    booktabs,
    dcolumn,
    enumerate,
    float,
    fullpage,
    hyperref,
    mathrsfs,
    multirow,
    siunitx,
    tabularx,
    titlesec,
    tikz,
}
\usepackage[numbers]{natbib}

% Custom Commands

\renewcommand{\emptyset}{\varnothing}

\newtheorem{theorem}{Theorem}[]
\newtheorem{proposition}{Proposition}[]
\newtheorem{lemma}{Lemma}[]
\newtheorem{corollary}{Corollary}[]

\newtheorem{observation}{Observation}[]

\newtheorem{definition}{Definition}

\newcounter{daggerfootnote}

\newcolumntype{d}[1]{D{.}{.}{#1}}

\DeclareMathVersion{nxbold}
\SetSymbolFont{operators}{nxbold}{OT1}{cmr} {b}{n}
\SetSymbolFont{letters}  {nxbold}{OML}{cmm} {b}{it}
\SetSymbolFont{symbols}  {nxbold}{OMS}{cmsy}{b}{n}

\setlength{\parskip}{.5em}
\setlength{\parindent}{0em}

\title{Optimal Sensor Placement in Power Grids: Power Domination, Set Covering, and the Neighborhoods of Zero Forcing Forts}
\author{{\Large Logan A. Smith and Illya V. Hicks}}
\date{\today}
\affil{
    \textit{Department of Computational and Applied Mathematics}\\
    \textit{Rice University}\\
    \textit{Houston, TX 77005}}

\begin{document}

\maketitle

\begin{abstract} \noindent
  To monitor electrical activity throughout the power grid and mitigate outages, sensors known as phasor measurement units can installed. Due to implementation costs, it is desirable to minimize the number of sensors deployed while ensuring that the grid can be effectively monitored. This optimization problem motivates the graph theoretic power dominating set problem. In this paper, we propose a novel integer program for identifying minimum power dominating sets by formulating a set cover problem. This problem's constraints correspond to neighborhoods of zero forcing forts; we study their structural properties and show they can be separated, allowing the proposed model to be solved via row generation. The proposed and existing methods are compared in several computational experiments in which the proposed method consistently exhibits an order of magnitude improvement in runtime performance. \newline 
  
  \noindent
  \textbf{Keywords:} Power Domination, Combinatorial Optimization, Integer Programming, Computational Complexity, Graph, Zero Forcing
\end{abstract}

\section{Introduction}

In order to mitigate power outages, electric companies must monitor the power grid for electrical imbalances. To observe the current and voltage levels across the power grid, sensors known as Phasor Measurement Units (PMUs) can be installed at power substations. Although outfitting all power substations with PMUs could provide the measurements necessary to monitor the power grid, installation costs make it desirable to minimize the number of sensors implemented, while still ensuring that the network can be effectively monitored. By utilizing basic circuit analysis (e.g. Kirchoff's Voltage laws and Ohm's Law) and a partial set of measurements directly obtained with PMUs, it is often possible to uniquely determine the voltage levels and currents of power grid components which are not being directly measured. The resulting optimization problem, finding minimal arrangements of sensors that can still effectively monitor the power grid, is broadly referred to as the PMU placement problem and was originally introduced by Baldwin et al. \cite{baldwin}.

Since its introduction, many strategies for the PMU placement problem have been posed, often ensuring that the resulting solutions have added resiliencies to natural disasters or malfunctioning equipment. Notably among these are fault tolerant placement strategies \cite{pmu-fault1}, strategies which can detect incorrect measurements \cite{pmu-inj2, pmu-inj1, pmu-inj3}, and multistage problems in which PMUs are installed over several time periods \cite{pmu-multi1, pmu-multi2}. %A taxonomy of the many different computational methods proposed to solve the PMU placement problem was compiled by Manousakis et al. \cite{taxonomy}.

The PMU placement problem has also been posed in terms of graph networks, where it is known as the power dominating set problem. An electrical network can be represented as a graph $G=(V,E)$, with power substations and transmission lines corresponding to vertices and edges. The set of power substations housing PMUs then corresponds to a vertex set $S \subseteq V$. A set of rules called the power domination color changing rules can be applied to $S$ in which additional vertices can also be colored. Haynes et al. \cite{haynes} introduced the first set of power dominating color changing rules and showed that if $V$ can be entirely colored by their rules, and PMUs are placed at each substation in $S$, then the state of each transmission line and power substation can be uniquely determined. Benson et al. \cite{benson} later introduced a simpler color changeing rule, and showed it to be equivalent to the color changing rule introduced by Haynes et al. \newline

\begin{minipage}{\textwidth}
% Power Domination Color Rules
\textit{Power Domination Color Changing Rule}:~\cite{benson}
\begin{enumerate}
\item[1.] Color all vertices in or adjacent to an initial vertex set $S$.
\item[2.] While there are colored vertices each adjacent to exactly one uncolored vertex, color those vertices. 
\end{enumerate}
\end{minipage} \newline \newline

The set of vertices that are colored at the end of this process is called the {\em power dominating closure} of $S$, and is denoted as $cl_P(S)$. If $cl_P(S) = V$, then $S$ is called a {\em power dominating set} in $G$. Furthermore, the cardinality of the minimum power dominating set(s) in $G$ is called the {\em power dominating number} of $G$, and is denoted as $\gamma_P(G)$. Finding minimum cardinality power dominating sets is equivalent to devising PMU placement strategies that can fully observe the network while utilizing of a minimum number of sensors.

% A bit more about the rules
During the first step of the power domination color changing rules, called the {\em domination step}, all uncolored vertices adjacent to an initially colored vertex are colored. In the second step, called the {\em propagation step}, additional vertices are colored if they are the unique uncolored neighbor of a colored vertex. An uncolored vertex $u$ that is colored by a vertex $v$ is said to be {\em colored} or {\em forced} by that vertex. This terminology is used due to the similarity between power domination and the related zero forcing color changing process in graphs \cite{AIM-zf}, which we briefly introduce in a later section. The propagation step corresponds to the inferences which can be made to uniquely the unmeasured states of power substations and transmission lines. This process introduces additional complexity in determining which parts of the electrical network can be observed. In many computational studies of the PMU placement problem, this step is not considered. The resulting problem more closely resembles the dominating set problem and thus computational methods for the PMU placement problem cannot be directly compared to those in power domination.

Several variations of power domination have also been proposed. These include versions in which the initial colored set must induce a connected graph \cite{conn-pd, fan&watson} or contain certain vertices \cite{bozeman, Restricted}. The computational complexity of power domination has also been a topic of interest. While the power dominating set decision problem is known to be NP-Complete even in the case of planar bipartite graphs \cite{haynes}, the power domination numbers of graphs in several classes can be computed with polynomial time complexity algorithms, including: trees~\cite{haynes}, block graphs~\cite{comp-blocks}, grid graphs~\cite{comp-grids}, interval graphs~\cite{comp-intervals}, and circular-arc graphs~\cite{comp-arcs}. Despite wide interest in power domination, relatively little work on computational methods has been published.

Currently, several integer programs (IPs) have been proposed. Each of these models have variables corresponding to the a candidate initial set of colored vertices, and the sequence in which forces occur so that all vertices are colored. Additionally, their constraints ensure that the order in which forces occur in obeys the power domination color change rules. Aazami~\cite{aazami} first introduced an IP of this style which Brimkov et al. \cite{conn-pd} improved upon by reducing the number of variables needed for the model. A third IP has also been introduced by Fan and Watson~\cite{fan&watson} which works on similar principles as the two methods previously mentioned. However, their model considers a slightly different problem than the standard power dominating set problem and cannot be directly compared.

% Caleb def's forts, bozeman-brimkov relates to power domination
%% While the study of the zero forcing process originated with physicists working to control quantum systems~\cite{} and mathematicians investigating a bound for the rank of combinatorial matrices~\cite{}, the similarity of the zero forcing process and the power domination process in graphs allows for some techniques developed for zero forcing to be extended to power domination. One such example is the Integer Program (IP) introduced by Fast~\cite{fast-thesis} for finding minimum cardinality zero forcing sets. This IP was later modified by Brimkov, Mikesell, and Smith~\cite{conn-pd} to create an IP for power domination. In addition to introducing the zero forcing infection model however, Fast also introduced graph structures known as {\em zero forcing forts} and used these structures to devise a separation algorithm for identifying minimum cardinality zero forcing forts.

As the main result of this paper, we propose an alternative method for computing the power domination numbers of arbitrary graphs. Our method is based on a novel integer program which exploits the neighborhood sets of structures called forts, arising in the related zero forcing problem. To formulate this IP and verify that it correctly computes the power domination numbers of arbitrary graphs, we study the structural properties of their neighborhoods, provide necessary and sufficient conditions for vertex sets to be the neighborhoods of forts, and give a practical procedure by which these neighborhoods can be identified. We also show two ways in which violated constraints for our IP can be separated, thereby allowing our model to be solved with a row generation strategy, unlike the existing methods.

In additional, we show several ways in which the runtime performance of both our method and existing methods can be improved by exploiting the presence of special optimal solutions as well as graph structures which commonly in electrical networks. We conclude with computational comparisons of the method we propose with the IP proposed by Brimkov et al. \cite{conn-pd}, and consistently demonstrate at least an order of magnitude improvement in runtime performance. 

\section{Notation}
% Notation :
%     Graphs, Vertices, Edges, Neighborhoods, degree, set hoods
The following notation will be used throughout this paper. A {\em graph}, $G = (V(G),E(G))$ is an ordered pair made up of a set of {\em vertices} $V(G)$, and a set of {\em edges}, $E(G) \subseteq V(G) \times V(G)$. Often these sets can be simply abbreviated as $V, E$, and the graph parameters {\em order}, $n=|V|$, and {\em size}, $m=|E|$, are used for convenience. Two vertices $u,v$ in $V$ are said to be {\em adjacent} if the edge $uv=e$ is in $E$. Additionally, $e=uv$ is said to be incident on its end vertices, $u,v \in V$. An edge $uv$ is called a {\em loop} if $u=v$ or {\em parallel} if there exist two distinct edges exist between $u,v$. All graphs in this paper are assumed to contain no loops or parallel edges. The {\em open neighborhood} of a vertex $v$ is defined as $N(v) := \{u \in V : uv \in E(G)\}$, and the {\em closed neighborhood} of $v$ is defined as $N[v] := N(v) \bigcup \{v\}$. Additionally, we define the open neighborhood of a set $S$ as $N(s) := (\cup_{v \in S} N(v))\backslash S$ and the closed neighborhood of $S$ as $N[S] = \cup_{v \in S} N[v]$. The degree of a vertex $\deg(v) := |N(v)|$.

A graph $H = (V_H, E_H)$ is said to be a {\em subgraph} of $G$ if $V_H \subseteq V$ and $E_H \subseteq E$. A subgraph $H$ can be {\em induced} in $G$ with a vertex set $S$, denoted $H = G[S] := (S, \{uv \in E : \{u,v\} \subseteq E\})$. A {\em path} in $G$ is a sequence of vertices $(v_1, v_2, \dots, v_k)$ such that for $i = 1, 2, \dots, k-1, v_i$ is adjacent to $v_{i+1}$. $G$ is called {\em connected} if for any two vertices $u,v$, there is a path from $u$ to $v$. A graph is called a {\em cycle} if it is connected and all vertices have degree $2$. A graph is called a {\em tree} if it is connected and does not contain any cycles. A {\em component} of a graph is a maximal set of vertices that induce a connected subgraph. Additional notation and well known graph properties are provided by Bondy and Murty~\cite{bondy}.

%% A {\em digraph} is a graph with directed edges known as {\em arcs}. For an arc $uv$, we call $u$ its {\em tail} and $v$ its {\em head}. Arcs are said to be incident upon their heads, and the definitions of neighborhoods and degrees in digraphs are analogously to the corresponding undirected terms. 

\section{Zero Forcing Fort Neighborhoods} \label{sec:structural}

The focus of this section is the structural properties of fort neighborhoods, and in particular the development of necessary and sufficient conditions for vertex sets to be fort neighborhoods. We include a brief introduction of zero forcing, derive a set cover formulation for the power dominating set problem defined with fort neighborhoods, and introduce a vertex partitioning scheme used to analyze fort neighborhoods and power dominating sets. In later sections, we show that several of these results can be exploited for computational purposes.

\subsection{Power Domination Set Cover Formulation}
Similar to power domination, zero forcing is a dynamic graph process in which an initial set of vertices are colored and additional vertices are colored as determined by a color changing rule. The zero forcing color changing rules arose in two separate fields, both with physicists studying control in quantum systems and with mathematicians seeking to bound the minimum rank of combinatorial matrices. The {\em zero forcing color changing rule}, considered for a graph $G$ and an initial set of colored vertices $S \subseteq V$, is stated below along with the definition of zero forcing forts, introduced by Brimkov et al \cite{brimkov_fast}. \newline

\begin{minipage}{\textwidth}
% Zero Forcing Color Rule
\textit{Zero Forcing Color Changing Rule}:~\cite{quantum-zf, AIM-zf}
\begin{enumerate}
\item[1.] While there are colored vertices each adjacent to exactly one uncolored vertex, color those vertices. 
\end{enumerate}
\end{minipage} \medskip

\begin{definition}
A non-empty set of vertices $F \subseteq V(G)$ is a {\em zero forcing fort} if for each vertex $v \in N(F)$, $|N(v) \cap F| \geq 2$.
\end{definition}

We refer to zero forcing forts simply as {\em forts}, and denote the {\em set of forts in $G$} as $\mathscr{F}(G)$. Similarly, we will refer to a set which is the closed neighborhood of a fort a {\em fort neighborhood}, and denote the {\em set of fort neighborhoods in $G$} as $\mathscr{M}(G)$. The following theorem is one of the main results posed by Brimkov et al., and characterizes the relation between forts and zero forcing sets. 

\begin{theorem} \text{\cite{brimkov_fast}} \label{thm:forts-sc}
Let $S$ be a set of vertices in a graph $G$ and $F \in \mathscr{F}(G)$. If $S$ is a zero forcing set in $G$, then $F \cap S \neq \emptyset$. 
\end{theorem}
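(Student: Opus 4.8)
The plan is to prove the contrapositive: assuming $S \cap F = \emptyset$, I would show that $S$ fails to be a zero forcing set. The central claim is that if the initial colored set $S$ is disjoint from the fort $F$, then no vertex of $F$ is ever colored during the zero forcing process. Since a fort is non-empty by definition, this immediately implies that $S$ cannot color all of $V$, and hence $S$ is not a zero forcing set.

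To establish this claim, I would induct on the sequence of forces performed under the color changing rule. In the base case the colored set is exactly $S$, which is disjoint from $F$ by assumption, so no vertex of $F$ is colored. For the inductive step, suppose that after some number of forces no vertex of $F$ has yet been colored, and consider the next force, in which a colored vertex $w$ colors its unique uncolored neighbor $u$. I claim that $u \notin F$, which maintains the invariant.

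Suppose toward a contradiction that $u \in F$. Since $w$ is adjacent to $u$, either $w \in F$ or $w \in N(F)$, and these two cases are exhaustive. If $w \in F$, then $w$ is uncolored by the inductive hypothesis, contradicting that $w$ performs the force. If instead $w \in N(F)$, then the defining property of the fort gives $|N(w) \cap F| \geq 2$, so $w$ has at least two neighbors lying in $F$; by the inductive hypothesis both of these are uncolored, contradicting that $u$ is the unique uncolored neighbor of $w$. In either case we reach a contradiction, so $u \notin F$ and the fort remains entirely uncolored after the force, completing the induction.

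This is essentially a clean induction, and I do not anticipate a serious obstacle. The only point that requires care is the case analysis on whether the forcing vertex $w$ lies inside or outside $F$: the fort condition $|N(v) \cap F| \geq 2$ is stated only for $v \in N(F)$, so it must be applied precisely to the exterior case, while the interior case $w \in F$ has to be excluded directly via the induction hypothesis rather than the fort inequality.
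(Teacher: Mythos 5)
Your argument is correct: the contrapositive induction on the forcing sequence, with the case split on whether the forcing vertex $w$ lies in $F$ (excluded by the invariant) or in $N(F)$ (excluded by the fort condition $|N(w)\cap F|\geq 2$), is exactly the standard proof of this result. Note that the paper itself states this theorem as a cited result from Brimkov et al.\ and gives no proof, so there is nothing to compare against internally; your write-up matches the argument in the cited reference and has no gaps.
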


Proposition \ref{prop:fn-setcover} is the power domination analogue of this theorem, which we show holds in both directions.

\begin{proposition} \label{prop:fn-setcover}
Let $S$ be a set of vertices in a graph $G$. Then $S$ is a power dominating set in $G$ if and only if for all $M$ in $\mathscr{M}(G), S \cap M \neq \emptyset$.
\end{proposition}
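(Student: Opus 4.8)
The plan is to reduce the statement to the known relationship between zero forcing sets and forts (Theorem \ref{thm:forts-sc}) by exploiting the fact that the power domination process is nothing more than a single domination step followed by the zero forcing process. First I would record the bridging observation that $cl_P(S)$ equals the zero forcing closure of $N[S]$: the domination step colors exactly $N[S]$, and the propagation step is verbatim the zero forcing color changing rule applied to an already-colored set. Consequently $S$ is a power dominating set in $G$ if and only if $N[S]$ is a zero forcing set in $G$.

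The second ingredient is a purely set-theoretic equivalence between hitting fort neighborhoods with $S$ and hitting forts with $N[S]$. For any fort $F$ I would show
\[
S \cap N[F] \neq \emptyset \iff N[S] \cap F \neq \emptyset .
\]
This follows because each side asserts the existence of a vertex $s \in S$ and a vertex $f \in F$ with either $s = f$ or $sf \in E$, a condition symmetric in $S$ and $F$; unwinding the definitions of the closed neighborhoods $N[F]$ and $N[S]$ on each side yields exactly this condition. Since $\mathscr{M}(G)$ is precisely the collection of sets $N[F]$ with $F \in \mathscr{F}(G)$, the requirement that $S$ meet every member of $\mathscr{M}(G)$ is equivalent to the requirement that $N[S]$ meet every fort of $G$.

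For the forward direction, suppose $S$ is a power dominating set. Then $N[S]$ is a zero forcing set, so Theorem \ref{thm:forts-sc} gives $N[S] \cap F \neq \emptyset$ for every $F \in \mathscr{F}(G)$; by the equivalence above, $S \cap N[F] \neq \emptyset$ for every fort, i.e.\ $S$ meets every fort neighborhood. For the converse I would argue by contraposition, which is where the real work lies: Theorem \ref{thm:forts-sc} supplies only the implication ``zero forcing set $\Rightarrow$ meets every fort,'' so I must establish the reverse implication directly. Assume $S$ is not a power dominating set and let $C = cl_P(S) \neq V$. Setting $F := V \setminus C$, I would verify that $F$ is a fort: $F$ is nonempty, and since propagation has stalled, no colored vertex has exactly one uncolored neighbor; hence any $v \in N(F)$, being colored (it lies outside $F$) and adjacent to the still-uncolored set $F$, must have at least two neighbors in $F$, giving $|N(v) \cap F| \geq 2$. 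Because $N[S] \subseteq C$ is disjoint from $F$, the equivalence above yields $S \cap N[F] = \emptyset$, exhibiting a fort neighborhood that $S$ fails to meet.

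The main obstacle is the converse direction, specifically the claim that the complement of a stalled power domination closure is a fort. The delicate points will be checking nonemptiness and verifying the degree condition exactly at vertices of $N(F)$ while correctly distinguishing ``at least one uncolored neighbor'' from ``exactly one,'' so that the stalling hypothesis forces the count up to two. The set-theoretic equivalence relating $N[F]$ and $N[S]$ is routine but must be stated carefully, since $N(\cdot)$ excludes the set itself whereas $N[\cdot]$ includes it.
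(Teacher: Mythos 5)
Your proposal is correct and follows essentially the same route as the paper: the forward direction via the equivalence ``$S$ is power dominating iff $N[S]$ is zero forcing'' combined with Theorem~\ref{thm:forts-sc}, and the converse by contraposition using $F = V \setminus cl_P(S)$ as a fort whose closed neighborhood misses $S$. The only difference is that you isolate the symmetric adjacency equivalence $S \cap N[F] \neq \emptyset \iff N[S] \cap F \neq \emptyset$ as an explicit lemma, which the paper uses implicitly.
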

\begin{proof}
  We first prove the backwards direction by verifying its contrapositive statement. Let $G$ be a graph and let $S \subset V$ be a set that is not power dominating in $G$. Let $F := V \backslash cl_P(S) \neq \emptyset$. If any $v \in N(F)$ has exactly one $u$ neighbor in $F$, then $v$ could color $u$. Since this cannot be the case, no such $u$ can exist, thus by definition $F \in \mathscr{F}(G)$. If any vertex $v$ in $F$ is adjacent to a vertex in $S$, $v$ would be colored in the domination step, thus $v$ would be in $cl_P(S)$. Since $F \cap N[S] = \emptyset$, $N[F] \cap S = \emptyset$ and thus the contrapositive holds. 

  The other direction is shown directly. Let $S$ be a power dominating set in $G$. Consider an arbitrary $M \in \mathscr{M}(G)$ and $F \in \mathscr{F}(G)$ such that $M = N[F]$. Since $S$ is a power dominating set in $G$ if and only if $N[S]$ is a zero forcing set in $G$, by Theorem \ref{thm:forts-sc}, $N[S] \cap F \neq \emptyset$. Thus, $S \cap M = S \cap N[F] \neq \emptyset$. 
\end{proof}

Motivated by Proposition \ref{prop:fn-setcover}, we formulate Model 1 as an IP defined for an arbitrary graph $G$. In this model, each vertex in $V(G)$ is represented with a binary variable $s_v$, indicating the vertices of a set $S$. Constraint \ref{cons:1} requires that $S$ intersects each fort neighborhood present in $G$, thus a solution of Model 1 is feasible if and only if the corresponding $S$ is a power dominating set of $G$. Since the objective function of Model 1 is $|S|$, the optimum of Model 1 is $\gamma_P(G)$ and optimal solutions indicate minimum cardinality power dominating sets. \newline

\noindent \text{\textbf{Model 1: Set Cover Formulation}}
{
\begin{align}
\nonumber\min \;&\sum_{v\in V} s_v\\
\text{s.t. } \;& \sum_{v \in M} s_v \geq 1 \quad & \forall M\in \mathscr{M}(G) \label{cons:1} \\
\nonumber & s_v \in \{0,1\} & \forall v \in V
\end{align}
}
\begin{corollary} \label{cor:pd-setcover}
  For a graph $G$, $S$ is a power dominating set if and only if the characteristic vector of $S$ is a feasible solution for Model 1. Futhermore, the optimum of Model 1 is $\gamma_P(G)$.
\end{corollary}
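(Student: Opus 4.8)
The plan is to treat this corollary as a direct translation of Proposition \ref{prop:fn-setcover} into the language of Model 1, since it adds essentially no new mathematical content beyond that proposition. The first thing I would make explicit is the correspondence between a binary vector $s \in \{0,1\}^{V}$ and the vertex set $S := \{v \in V : s_v = 1\}$ it encodes, namely that $s$ is precisely the characteristic vector of $S$ and that this assignment is a bijection between $\{0,1\}^{V}$ and the subsets of $V$. This bijection is what lets me speak interchangeably of a feasible solution of Model 1 and of the set $S$ it represents.

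Next I would analyze a single constraint of type \ref{cons:1}. For a fixed $M \in \mathscr{M}(G)$, because each $s_v$ is binary, the inequality $\sum_{v \in M} s_v \geq 1$ holds if and only if $s_v = 1$ for at least one $v \in M$, which is exactly the condition $S \cap M \neq \emptyset$. Consequently, the characteristic vector of $S$ satisfies every constraint \ref{cons:1} if and only if $S \cap M \neq \emptyset$ for all $M \in \mathscr{M}(G)$. Invoking Proposition \ref{prop:fn-setcover}, this latter condition is equivalent to $S$ being a power dominating set in $G$, which establishes the claimed equivalence between feasibility and power domination.

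For the \emph{furthermore} claim, I would observe that the objective $\sum_{v \in V} s_v$, evaluated at the characteristic vector of $S$, equals $|S|$. Since the feasible region of Model 1 consists exactly of characteristic vectors of power dominating sets, minimizing the objective is the same as minimizing $|S|$ over all power dominating sets $S$; by definition this minimum is $\gamma_P(G)$, and optimal solutions therefore encode minimum cardinality power dominating sets.

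I expect no genuine obstacle here: all the difficulty resides in Proposition \ref{prop:fn-setcover} and in Theorem \ref{thm:forts-sc} on which it rests, so what remains is only the routine bookkeeping of passing between sets and their characteristic vectors and of checking that a sum-at-least-one constraint over binary variables encodes a nonempty intersection. The single point I would be careful to state is that the map $s \leftrightarrow S$ is a bijection, so that \textbf{feasible solution} and \textbf{power dominating set} are in genuine one-to-one correspondence rather than merely related.
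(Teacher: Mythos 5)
Your proposal is correct and follows exactly the reasoning the paper gives: the corollary is left without a formal proof, but the prose surrounding Model 1 makes the same argument, namely that Constraint \ref{cons:1} encodes $S \cap M \neq \emptyset$ for each $M \in \mathscr{M}(G)$, so feasibility is equivalent to power domination by Proposition \ref{prop:fn-setcover}, and the objective $\sum_{v} s_v = |S|$ yields $\gamma_P(G)$ at the optimum. Your added care about the bijection between binary vectors and vertex subsets is sound bookkeeping but introduces nothing beyond what the paper implicitly uses.
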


Balas and Ng \cite{balas-ng} provide necessary and sufficient conditions for cover constraints in the form of Constraint \ref{cons:1} to be facet inducing. An immediate consequence of their work is that for any $M \in \mathscr{M}(G)$, if there exists $M \in \mathscr{M}(G)$ such that $M' \subset M$, then the cover constraint generated by $M$ is not facet inducing for the feasible region of Model 1. To evaluate Model 1 then, it suffices to consider the subset of $\mathscr{M}(G)$ consisting of minimal fort neighborhoods, denoted $\mathscr{M}^0(G) := \{M\ \in \mathscr{M}(G) : M \text{ is minimal in }\mathscr{M}(G)\}$.

Unfortunately, the number of constraints generated by minimal fort neighborhoods may still be large with respect to the orders of the graphs considered. We state the claim here but relegate its proof until later in the section, as the use of additional results are needed. 

\newcommand{\propThree}{There exists a sequence of graphs $\{G_k\}_{k=3}^\infty$ such that \\ $|\mathscr{M}^0(G_k)| = \Omega((\sqrt{n_k})!)$ as $k \to \infty$, where $n_k = |V(G_k)|$.}

\begin{proposition}\label{prop:fn-count}
\propThree
\end{proposition}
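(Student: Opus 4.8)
The plan is to exhibit an explicit family $\{G_k\}$ whose minimal fort neighborhoods are in bijection with (at least) the permutations of $[k]$, so that their number is factorial in $\sqrt{n_k}$. First I would take the $k\times k$ rook's graph $R_k$, whose vertices are cells $v_{ij}$ for $i,j\in[k]$ with $v_{ij}$ adjacent to $v_{i'j'}$ exactly when $i=i'$ or $j=j'$, and then attach a pendant leaf $\ell_{ij}$ to each cell $v_{ij}$. This gives $n_k=2k^2$, so $k=\Theta(\sqrt{n_k})$. The design principle is that permutation matrices are forts of $R_k$, while the leaves serve two purposes: they record which cells a fort uses, forcing the resulting neighborhoods to be distinct, and they penalize ``small'' forts so that the permutation forts become minimal rather than being dominated by smaller ones.

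Next, for each $\sigma\in S_k$ I would set $F_\sigma=\{v_{i\sigma(i)}:i\in[k]\}\cup\{\ell_{i\sigma(i)}:i\in[k]\}$ and verify $F_\sigma\in\mathscr{F}(G_k)$. The only nontrivial check is on the cells outside $F_\sigma$: a cell $v_{ab}$ with $b\neq\sigma(a)$ is adjacent to exactly one selected cell in its row, namely $v_{a\sigma(a)}$, and one in its column, namely $v_{\sigma^{-1}(b)\,b}$, so it has two neighbors in $F_\sigma$; every leaf of an unused cell is nonadjacent to $F_\sigma$, and the leaves of used cells lie in $F_\sigma$. Hence the fort condition holds, and $N[F_\sigma]$ equals the set of all cells together with exactly the $k$ leaves $\{\ell_{i\sigma(i)}\}$. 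Since this leaf set determines $\sigma$, the neighborhoods $N[F_\sigma]$ are pairwise distinct.

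The crux is minimality: I must show $N[F_\sigma]\in\mathscr{M}^0(G_k)$, i.e.\ that no fort neighborhood is a proper subset. I would argue that any fort $F'$ with $N[F']\subseteq N[F_\sigma]$ must in fact equal $F_\sigma$. The two key observations are: (i) a leaf has a unique neighbor, so if $F'$ contains a cell it must also contain that cell's leaf (otherwise the leaf lies in $N(F')$ with a single neighbor in $F'$); and (ii) the only leaves appearing in $N[F_\sigma]$ are $\{\ell_{i\sigma(i)}\}$, which together with (i) confines every cell and every leaf of $F'$ to the support of $\sigma$, so $F'\subseteq F_\sigma$. Finally, no nonempty proper subset of $F_\sigma$ is a fort: dropping a used cell empties some column, and then a cell in an occupied row but that empty column has exactly one neighbor in $F'$, violating the fort condition (a leaf-only remainder fails for the same reason, and dropping only the leaf of a retained cell violates it immediately). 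Thus $F'=F_\sigma$ and $N[F_\sigma]$ is minimal.

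Putting these together yields $|\mathscr{M}^0(G_k)|\ge k!$, and since $k=\Theta(\sqrt{n_k})$ this is factorial growth of order $\sqrt{n_k}$, giving the claimed $\Omega((\sqrt{n_k})!)$ lower bound (a leaner distinguishing gadget could sharpen the constant inside the factorial, but the factorial rate is the point). I expect the minimality step to be the main obstacle, since it is where one must rule out all competing forts, in particular the polynomially many ``$2\times2$ square'' forts that are minimal in $R_k$ alone; the pendant leaves are precisely the device that breaks the containment between those square neighborhoods and the permutation neighborhoods, and checking that the leaves neither create spurious small forts nor destroy the fort property of the permutation matrices is the delicate part.
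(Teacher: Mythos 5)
Your construction is genuinely different from the paper's and is structurally sound. The paper subdivides every edge of $K_k$ and hangs a leaf on each branch vertex, so that minimal fort neighborhoods arise from cycles of the subdivided clique with one subdivision vertex swapped for its two adjacent leaves, giving $\sum_{\ell=3}^{k}\binom{k}{\ell}\frac{\ell!}{2}=\Omega(k!)$ of them; you instead use the rook's graph with a pendant on every cell and count permutation forts. Your verification that each $F_\sigma$ is a fort, that the leaf sets make the closed neighborhoods pairwise distinct, and your minimality argument (a leaf forces its cell's membership to be all-or-nothing in any competing fort, confining that fort to $F_\sigma$, whose nonempty proper subsets all fail the fort condition via an emptied column or an exposed leaf or cell) are all correct, and you rightly identify that the pendants are what prevent the $2\times 2$ square forts from producing neighborhoods nested inside the permutation neighborhoods.

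The one genuine gap is the quantitative step you dismiss as ``the constant inside the factorial,'' and it cannot be dismissed: your graph has $n_k=2k^2$, so $\sqrt{n_k}=\sqrt{2}\,k$, and $k!$ is \emph{not} $\Omega\bigl((\sqrt{2}\,k)!\bigr)$ --- the ratio $(\sqrt{2}\,k)!/k!$ is a product of roughly $(\sqrt{2}-1)k$ factors each exceeding $k$, hence grows without bound. Your count therefore establishes only $|\mathscr{M}^0(G_k)|=\Omega\bigl((\sqrt{n_k/2})!\bigr)$, which is strictly weaker than the stated $\Omega\bigl((\sqrt{n_k})!\bigr)$; a constant shift inside a factorial's argument changes the asymptotic class. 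The paper's construction dodges this precisely because it is leaner: $n_k=\binom{k}{2}+2k\le k^2$ for $k\ge 3$, so $\sqrt{n_k}\le k$ and the $\Omega(k!)$ count does imply the claim. To repair your version you must actually supply the slimmer gadget you allude to --- you need the number of counted neighborhoods to be factorial in $\sqrt{n_k}$ itself, so the host graph plus distinguishing pendants must fit in at most $k^2$ vertices (as the subdivided $K_k$ does), not $2k^2$.
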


Since Model 1 may have a superpolynomial number of constraints, we adopt a constraint generation approach for solving Model 1. We note that Bozeman et al. \cite{bozeman} introduced a similar IP for power domination, and also propose the use of a constraint generation method with an auxiliary integer program. However in the constraint generation method they propose, fractional solutions of the master problem can lead their auxiliary integer program to generate constraints that are already satisfied by the fractional solution. Additionally, their constraint generation approach depends on the discovery of fort sets rather than the fort neighborhoods. The next set of observations show why fort neighborhoods are the focus of our study rather than simply the forts themselves.

\begin{figure}
	\centering
  \hfill
\begin{tikzpicture}[scale=.75]
  \begin{scope}[every node/.style={circle,draw,minimum size=2mm}]
    \def \x {0}
    \def \y {0}
    \def \spacing {2}
    \def \rspacing {1.25}
    \node[label={[rectangle]left:$v_3$}] (v3)   at (\x+.25*\spacing, \y) {};
    \node[label={[rectangle]right:$v_4$}] (v4)   at (\x+ \spacing, \y-.25*\spacing) {};
    \node[label={[rectangle]:$v_1$}] (v1)   at (\x+.25*\spacing, \y+\spacing) {};
    \node[label={[rectangle]:$v_2$}] (v2)   at (\x +\spacing, \y+1.25*\spacing) {};
    \node[label={[rectangle]:$v_5$}] (v5)   at (\x +1.625*\spacing, \y + 0.5*\spacing) {};

    \node[label={[rectangle]:}] (v8) at (\x +1.625*\spacing + 1*\rspacing, \y + 0.5*\spacing) {};
    \node (v9) at (\x +1.625*\spacing + 2*\rspacing, \y + 0.5*\spacing) {};
    \node[label={[rectangle]:$v_8$}] (v10) at (\x +1.625*\spacing + 3*\rspacing, \y + 0.5*\spacing) {};
    \node[label={[rectangle]:$v_6$}] (v6)   at (\x+1.625*\spacing+1.5*\rspacing, \y+1.25*\spacing) {};
    \node[label={[rectangle]right:$v_7$}] (v7)   at (\x+1.625*\spacing+1.5*\rspacing, \y-0.25*\spacing) {};
  \end{scope}
  \begin{scope}[every node/.style={circle,draw,minimum size=2mm}]
    \draw[very thick] (v1) to (v2)
    (v2) to (v5)
    (v5) to (v4)
    (v4) to (v3)
    (v3) to (v1)
    (v2) to (v4)

    (v5) to (v8)
    (v8) to (v9)
    (v9) to (v10)
    (v6) to (v5)
    (v6) to (v10)
    (v6) to (v8)
    (v6) to (v9)
    (v7) to (v5)
    (v7) to (v10)
    (v7) to (v8)
    (v7) to (v9);
  \end{scope}
\end{tikzpicture}
\hfill
\begin{tikzpicture}[scale=.75]
  \begin{scope}[every node/.style={circle,draw,minimum size=2mm}]
  \def \yoff {1.5}
  \def \xoff {1}
  \node[label={[rectangle]:$v_1$}] (v1) at (0, \yoff) {};
  \node[label={[rectangle]:$v_2$}] (v2) at (1*\xoff, \yoff) {};
  \node[label={[rectangle]:$v_3$}] (v3) at (2*\xoff, \yoff) {};
  \node[label={[rectangle]:$v_4$}] (v4) at (3*\xoff, \yoff) {};
  \node[label={[rectangle]:$v_5$}] (v5) at (4*\xoff, \yoff) {};
  \node[label={[rectangle]:$v_6$}] (v6) at (5*\xoff, \yoff) {};

  \node[label={[rectangle]:$v_7$}] (v7)  at (0, 0) {};
  \node[label={[rectangle]:$v_8$}] (v8)  at (1*\xoff, 0) {};
  \node[label={[rectangle]:$v_9$}] (v9)  at (2*\xoff, 0) {};
  \node[label={[rectangle]:$v_{10}$}] (v10) at (3*\xoff, 0) {};
  \node[label={[rectangle]:$v_{11}$}] (v11) at (4*\xoff, 0) {};
  \node[label={[rectangle]:$v_{12}$}] (v12) at (5*\xoff, 0) {};
  \node[label={[rectangle, xshift=.125cm]:$v_{13}$}] (v13) at (6*\xoff, 0) {};

  \node[label={[rectangle]:}] (u1) at (0, -\yoff) {};
  \node[label={[rectangle]:}] (u2) at (1*\xoff, -\yoff) {};
  \node[label={[rectangle]:}] (u3) at (2*\xoff, -\yoff) {};
  \node[label={[rectangle]:}] (u4) at (3*\xoff, -\yoff) {};
  \node[label={[rectangle]:}] (u5) at (4*\xoff, -\yoff) {};
  \node[label={[rectangle]:}] (u6) at (5*\xoff, -\yoff) {};
  %% \newcommand \labi {0}
  %% \newcommand \labii {0}
  %%   \foreach \x in {0,...,5}{
  %%     \renewcommand \labi {\x+1}
  %%     \renewcommand \labii {\x+7}
  %%     \node[label={[rectangle]:$v_{\labii}$}] (\x)  at (\x,0) {};
  %%     \node[label={[rectangle]:$v_{\labi}$}] (+\x) at (\x, 1.5) {};
  %%     \node (-\x) at (\x,-1.5) {};}
  %%   \node[label={[rectangle]:$v_{13}$}] (j) at (6, 0) {};
  \end{scope}

  \begin{scope}[every node/.style={circle,draw,minimum size=2mm}]
    \draw[very thick] (v1) to (v2)
    (v2) to (v3)
    (v3) to (v4)
    (v4) to (v5)
    (v5) to (v6)
    (v6) to (v13)
    (v7) to (v8)
    (v8) to (v9)
    (v10) to (v9)
    (v10) to (v11)
    (v11) to (v12)
    (v12) to (v13)
    (u1) to (u2)
    (u2) to (u3)
    (u3) to (u4)
    (u4) to (u5)
    (u5) to (u6)
    (u6) to (v13);
  \end{scope}
\end{tikzpicture}
\hfill
  \caption{\footnotesize In the left graph $F_1 = \{v_6, v_7\}, F_2 = \{v_1, v_2, v_4\}$ are both forts, and $|N[F_1]| > |N[F_2]|$. If Model 1 is being solved with a constraint generation approach, adding the constraint corresponding to $N[F_2]$ will reduce the number of possible solutions more sharply than adding the constraint corresponding to $N[F_1]$. In the right graph, the vertex set $F = \{v_i \in V : i \text{ is even}, 1 \leq i \leq 12\} \in \mathscr{F}(G)$, and $M = N[F] = \{v_1, \dots, v_{13}\}$. However, if any subset of $\{v_i \in V : i \text{ is odd}, 1 \leq i \leq 12\}$ is added to $F$, $F$ is still a fort with $N[F] = M$.}
    \label{fig:motivations}
\end{figure}
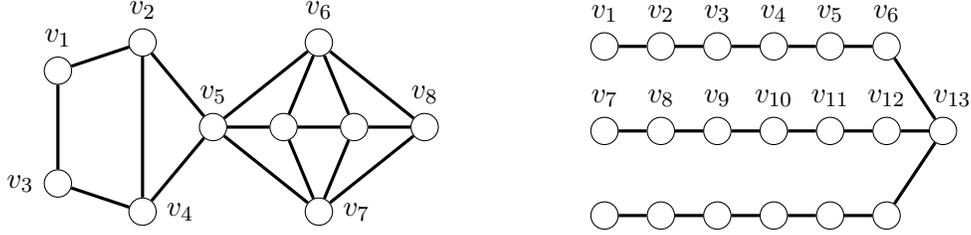
\begin{observation} \label{obs:forts-num}
 There exists a graph $G$ and $M \in \mathscr{M}(G)$ such that the bound $|\{F \in \mathscr{F}(G): N[F] = M \}| = \Omega(2^n)$ holds.
\end{observation}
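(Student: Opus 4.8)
The plan is to produce one family of graphs carrying a single fort neighborhood $M$ that is the closed neighborhood of exponentially many distinct forts. The engine is a closure-preservation principle, which also motivates the right-hand graph of Figure~\ref{fig:motivations}: if $F \in \mathscr{F}(G)$ and $B \subseteq N[F] \setminus F$ consists of vertices $w$ with $N(w) \subseteq N[F]$, then $F \cup B \in \mathscr{F}(G)$ and $N[F \cup B] = N[F]$. First I would verify this straight from the definitions. The neighborhood is unchanged since $N[F \cup B] = N[F] \cup \bigcup_{w \in B} N[w]$ and each $N[w] \subseteq N[F]$; consequently $N(F \cup B) \subseteq N(F)$, so any vertex $u$ still in the open neighborhood already satisfied $|N(u) \cap F| \geq 2$, and passing from $F$ to the larger set $F \cup B$ only increases this count, preserving the fort condition. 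Writing $A := \{w \in N[F] \setminus F : N(w) \subseteq N[F]\}$, the forts $F \cup B$ over all $B \subseteq A$ are pairwise distinct and share the neighborhood $M := N[F]$, giving at least $2^{|A|}$ forts with neighborhood $M$.

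It then remains to choose $G$ so that $|A| = n - O(1)$, which forces $2^{|A|} = \Omega(2^n)$. The cleanest witness is the complete graph $G = K_n$ with $M := V(G)$. Every set $F$ with $|F| \geq 2$ is a fort here: for each $u \in N(F) = V \setminus F$ one has $N(u) \cap F = F$, so $|N(u) \cap F| = |F| \geq 2$; and $N[F] = V = M$ for every nonempty $F$. Hence the forts with neighborhood $M$ include every vertex subset of size at least two, of which there are $\sum_{j=2}^{n} \binom{n}{j} = 2^n - n - 1 = \Omega(2^n)$, establishing the bound. (This is exactly the principle above applied to any fixed pair $F$, for which $A = V \setminus F$ has size $n - 2$.)

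The subtle point, and the reason I would foreground the general principle rather than the $K_n$ computation alone, is calibrating the base of the exponent to $2$. Verifying the fort condition and neighborhood equality for $K_n$ is routine; the real design constraint is that achieving $\Omega(2^n)$ on the nose requires nearly every vertex to be freely addable, i.e.\ $|A| = n - O(1)$. Sparse, bounded-degree realizations in the spirit of Figure~\ref{fig:motivations} still obey the closure-preservation principle and yield exponentially many forts with a common neighborhood, but with $n$ a constant factor larger than $|A|$ they only deliver a bound of the form $2^{cn}$ with $c < 1$; the density of $K_n$ is precisely what pushes the base up to $2$.
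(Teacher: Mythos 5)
Your proof is correct, and it takes a genuinely different route from the paper's. The paper does not argue the observation abstractly at all: it simply points to the right-hand graph of Figure~\ref{fig:motivations} (a spider with three legs hanging off $v_{13}$), where $F = \{v_2, v_4, \dots, v_{12}\}$ is a fort and any subset of the odd-indexed leg vertices may be added without changing $N[F]$. You instead isolate the mechanism behind that example as a closure-preservation lemma --- any $B \subseteq N[F]\setminus F$ whose members satisfy $N(w) \subseteq N[F]$ can be absorbed into $F$ --- and your verification of it is sound: $N[F\cup B] = N[F]$ forces $N(F \cup B) \subseteq N(F)$, and enlarging $F$ only increases $|N(u)\cap(F\cup B)|$. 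Your second observation is also well taken and is the real added value: the paper's spider family only yields $2^{|A|}$ forts with $|A| \approx n/3$, i.e.\ a bound of the form $2^{cn}$ with $c<1$, so read literally the stated $\Omega(2^n)$ is not delivered by the paper's own witness, whereas $K_n$ (where every $F$ with $|F|\geq 2$ is a fort with $N[F]=V$, giving $2^n - n - 1$ forts sharing the single neighborhood $M=V$) achieves the base $2$ exactly. The trade-off is that the paper's sparse example lives in the bounded-degree regime relevant to power grids and doubles as the counterexample for Observation~\ref{obs:forts-neighbs}, while your dense example is the one that actually matches the asymptotic claim as written; either suffices for the qualitative point the observation is meant to make, namely that enumerating forts rather than fort neighborhoods is wasteful.
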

\begin{observation}\label{obs:forts-neighbs}
The neighborhoods of minimum cardinality forts are not necessarily minimum cardinality fort neighborhoods, and minimum cardinality fort neighborhoods do not necessarily contain minimum weight forts. 
\end{observation}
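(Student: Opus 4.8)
Both assertions are negative (``not necessarily'') statements, so the natural strategy is to exhibit explicit counterexamples rather than argue in generality. I would prove both using the left-hand graph of Figure~\ref{fig:motivations}, where the entire argument reduces to checking the defining condition of a zero forcing fort directly (for each $v \in N(F)$, $|N(v) \cap F| \ge 2$) and then comparing cardinalities. The two forts $F_1 = \{v_6, v_7\}$ and $F_2 = \{v_1, v_2, v_4\}$ named in the caption will carry both parts.

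For the first claim, I would first verify that $F_1$ and $F_2$ are forts: every external neighbor of $F_1$ (the vertices $v_5$, $v_8$, and the two unlabeled interior vertices of the right-hand gadget) is adjacent to both $v_6$ and $v_7$, and every external neighbor of $F_2$ (namely $v_3$ and $v_5$) is adjacent to two vertices of $F_2$, so both satisfy the definition. Next I would argue that $F_1$ has minimum cardinality among all forts of this graph: no singleton $\{v\}$ can be a fort, since a neighbor of $v$ has at most one neighbor in $\{v\}$ and the graph has no isolated vertices, whence every fort has at least two vertices and $F_1$ attains this bound. Finally, computing closed neighborhoods gives $|N[F_1]| = 6$ while $N[F_2] = \{v_1, v_2, v_3, v_4, v_5\}$ has $|N[F_2]| = 5$. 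Thus the neighborhood of the minimum cardinality fort $F_1$ is strictly larger than another fort neighborhood, so $N[F_1] \in \mathscr{M}(G)$ is not minimum, proving the first claim.

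For the second claim, I would take $M := N[F_2] = \{v_1,\dots,v_5\}$ and argue that it is a minimum cardinality fort neighborhood that contains no minimum weight fort. The global minimum fort cardinality is $2$, witnessed by $F_1$, and $F_1 \not\subseteq M$. It then suffices to show that (i) $M$ is minimum among all fort neighborhoods, and (ii) no two-vertex fort of $G$ lies inside $M$, so that the smallest fort contained in $M$ is $F_2$ with $|F_2| = 3 > 2$. For (ii), note that a pair $\{a,b\}$ is a fort precisely when $a$ and $b$ are twins, i.e.\ $N(a)\setminus\{b\} = N(b)\setminus\{a\}$; a short finite check over the pairs in the induced pentagon-with-chord on $\{v_1,\dots,v_5\}$ shows no such twin pair exists, so $M$ contains no fort of size $2$. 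Combining (i) and (ii) shows the minimum fort neighborhood $M$ contains only forts of cardinality at least $3$, none of which is a minimum weight fort.

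The main obstacle is the global minimality bookkeeping in item (i), rather than any individual fort check. Establishing that $M$ is a minimum cardinality fort neighborhood requires ruling out fort neighborhoods of size $4$ or less, which means scanning the small forts of the graph: a neighborhood of size at most $4$ would force a fort $F$ with $|F| + |N(F)| \le 4$, i.e.\ either a twin pair sharing exactly two neighbors or a three-vertex fort with a single external neighbor. I expect this to be dispatched by a brief exhaustive argument on the degree structure of the left graph (the only twin pair is $\{v_6, v_7\}$, whose four common neighbors already force $|N[F_1]| = 6$, and the remaining low-degree vertices admit no three-vertex fort with a single external neighbor). Once these finite checks are recorded, both parts of the observation follow immediately from the cardinality comparisons above.
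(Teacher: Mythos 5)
Your proof is correct, and for the first half it is exactly the paper's argument: the same graph, the same forts $F_1=\{v_6,v_7\}$ and $F_2=\{v_1,v_2,v_4\}$, the observation that no singleton is a fort, and the comparison $|N[F_1]|=6>5=|N[F_2]|$. For the second half you diverge from the paper in the choice of weight function. The paper sets $w(v_1)=w(v_2)=w(v_4)=1$, $w(v_8)=10$, and $w\equiv 0$ elsewhere, so that the minimum weight fort (e.g.\ $\{v_6,v_7\}$, weight $0$) lies entirely outside $M=N[F_2]=\{v_1,\dots,v_5\}$ while every fort inside $M$ must use one of the weight-$1$ vertices; you instead take unit weights, reducing minimum weight to minimum cardinality, and show $M$ contains no two-vertex fort via the twin-pair characterization. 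Both instantiations are legitimate witnesses for a ``not necessarily'' claim; yours is arguably cleaner since it needs no ad hoc weights (the paper's weight of $10$ on $v_8$ plays no visible role), while the paper's makes the weight/cardinality distinction more vivid. You are also more careful than the paper on the one point both arguments genuinely need: that $\{v_1,\dots,v_5\}$ really is a \emph{minimum cardinality} fort neighborhood. The paper asserts this implicitly; your bookkeeping (the only twin pair is $\{v_6,v_7\}$ with four common neighbors, and a three-vertex fort with a single external neighbor would require $v_5$, the unique cut vertex, to separate off a component of size three, which does not exist) closes that gap and is worth recording.
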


The graph on the right in Figure \ref{fig:motivations} provides an example of a graph $G$ and fort neighborhood $M$ such that many forts have the same closed neighborhood. Observation \ref{obs:forts-neighbs} is demonstrated by the graph on the left in Figure \ref{fig:motivations}; in the example, $\{v_6, v_7\}$ is a minimum cardinality fort yet fort $\{v_1, v_2, v_4\}$ has a smaller closed neighborhood. Additionally, the second half of the observation is seen when vertices $v_1, v_2, v_4$ have weight $1$, $v_8$ has weight 10, and all other vertices have weight 0.

Generating constraints that contain a small number of variables often greatly reduces the time needed to solve set cover models like Model 1 in practice. Since these constraints are more difficult to satisfy than those with large numbers of variables, the solution space is more efficiently reduced if these stronger constraints are imposed. With Observation \ref{obs:forts-neighbs} however, we note that finding minimum cardinality forts does not necessarily lead to the discovery of minimum cardinality fort neighborhoods. Thus, if constraints are generated solely by finding forts, they are likely to be less effective than constraint generation methods that directly identify fort neighborhoods. Bearing this in mind, we turn our attention to fort neighborhoods so that such a constraint generation approach can be devised.

\subsection{Junction Vertex Partition}
We begin by introducing two sets defined for an arbitrary graph $G$: $J(G) := \{v \in V : \deg(v) \geq 3\}$, $\mathscr{P}(G) := \{V(P) : P \text{ is a component of } G[V \backslash J(G)]\}$. We call vertices in $J(G)$ {\em junctions}, vertex sets in $\mathscr{P}(G)$ {\em junction paths}, and note that $\{J(G)\} \cup P$ forms a partition of $V$. In the case that $G$ is connected and $J(G)$ is non-empty, several statements can be easily verified. 

%% Another picture of IEEE 18, but with the decomposition and examples of each type of path
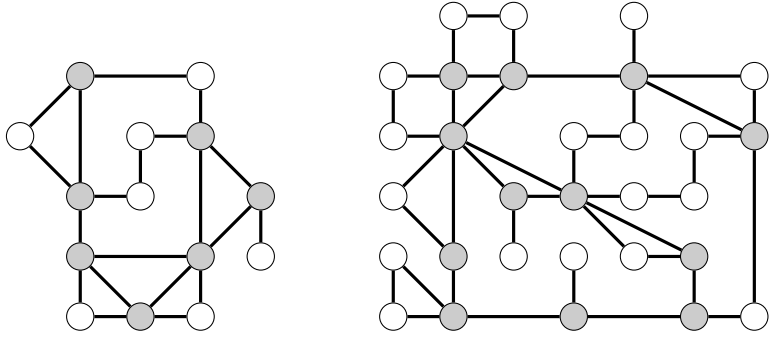
\begin{figure}
  \centering
  \begin{tikzpicture}[scale=0.4]
\begin{scope}[every node/.style={circle,draw,minimum size=2.5mm}]
	\definecolor{green}{gray}{0.8}
    \node (1) at (2,2) {};
    \node[fill=green] (2) at (4,2) {};
    \node (3) at (6,2) {};
    \node[fill=green] (4) at (6,4) {};
    \node[fill=green] (5) at (2,4) {};
    \node (6)[fill=green] at (2,6) {};
    \node[fill=green] (7) at (8,6) {};
    \node (8) at (8,4) {};
    \node (11) at (4,6) {};
    \node (10) at (4,8) {};
    \node (9)[fill=green] at  (6,8) {};
    \node (14) at (6,10) {};
    \node (13)[fill=green] at (2,10) {};
    \node (12) at (0,8) {};
    \draw[very thick] (1) to (2)
    (1) to (5)
    (2) to (3)
    (2) to (5)
    (2) to (4)
    (4) to (5)
    (3) to (4)
    (4) to (7)
    (9) to (7)
    (7) to (8)
    (4) to (9)
    (9) to (10)
    (13) to (6)
    (10) to (11)
    (6) to (11)
    (5) to (6)
    (6) to (12)
    (12) to (13)
    (14) to (9)
    (13) to (14);
\end{scope}
\end{tikzpicture} \hspace{1cm}
  \begin{tikzpicture}[scale=0.4, rotate=0]
\begin{scope}[every node/.style={circle,draw,minimum size=2.5mm}]
	\definecolor{green}{gray}{0.8}
    \node (1)  at ( 2, 12) {};
    \node (2)  at ( 4, 12) {};
    \node (3)  at ( 8, 12) {};
    \node (4)  at ( 0, 10) {};
    \node[fill=green] (5)  at ( 2, 10) {};
    \node[fill=green] (6)  at ( 4, 10) {};
    \node[fill=green] (7)  at ( 8, 10) {};
    \node (8)  at (12, 10) {};
    \node (9)  at ( 0,  8) {};
    \node[fill=green] (10) at ( 2,  8) {};
    \node (11) at ( 6,  8) {};
    \node (12) at ( 8,  8) {};
    \node (13) at (10,  8) {};
    \node[fill=green] (14) at (12,  8) {};
    \node (15) at ( 0,  6) {};
    \node[fill=green] (16) at ( 4,  6) {};
    \node[fill=green] (17) at ( 6,  6) {};
    \node (18) at ( 8,  6) {};
    \node (19) at (10,  6) {};
    \node (20) at ( 0,  4) {};
    \node[fill=green] (21) at ( 2,  4) {};
    \node (22) at ( 4,  4) {};
    \node (23) at ( 8,  4) {};
    \node[fill=green] (24) at (10,  4) {};
    \node (25) at ( 0,  2) {};
    \node[fill=green] (26) at ( 2,  2) {};
    \node[fill=green] (27) at ( 6,  2) {};
    \node[fill=green] (28) at (10,  2) {};
    \node (29) at (12,  2) {};
    \node (30) at ( 6,  4) {};
    \draw[very thick] (1) to (2)
    (1) to (5)
    (2) to (6)
    (3) to (7)
    (4) to (5)
    (4) to (9)
    (5) to (6)
    (5) to (10)
    (6) to (7)
    (6) to (10)
    (7) to (8)
    (7) to (12)
    (7) to (14)
    (8) to (14)
    (9) to (10)
    (10) to (15)
    (10) to (21)
    (10) to (16)
    (10) to (17)
    (11) to (17)
    (11) to (12)
    (13) to (14)
    (13) to (19)
    (14) to (29)
    (15) to (21)
    (16) to (17)
    (16) to (22)
    (17) to (18)
    (17) to (23)
    (17) to (24)
    (18) to (19)
    (20) to (25)
    (20) to (26)
    (21) to (26)
    (23) to (24)
    (24) to (28)
    (25) to (26)
    (26) to (27)
    (27) to (28)
    (28) to (29)
    (27) to (30);
\end{scope}
\end{tikzpicture}
  \caption{\footnotesize IEEE 14, 30 Bus Networks with $J(G)$ in gray. Proposition~\ref{prop:J-set} notes at least one minimum cardinality power dominating set is contained in $J(G)$. Shown later in Figures~\ref{tab:sep-results},~\ref{tab:full-results}, often half or fewer of the vertices in electrical network graphs are junctions.}
  \label{fig:ieee-j}
\end{figure}

\begin{proposition} \label{prop:num-jp-neighbors}
  For any $P \in \mathscr{P}(G)$, the following statements hold:
  \begin{enumerate}
  \item There exist $w_1, w_2 \in P: N(P) = (N(w_1) \cup N(w_2)) \backslash P$
  \item $1 \leq |N(P)| \leq 2$
  \item $N(P) \subseteq J(G)$
  \end{enumerate}
\end{proposition}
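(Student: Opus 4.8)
The plan is to exploit the degree constraint defining $J(G)$: every vertex of $V \backslash J(G)$ has degree at most $2$ in $G$, so the induced subgraph $G[V \backslash J(G)]$ has maximum degree at most $2$. Consequently each of its connected components --- that is, each $P \in \mathscr{P}(G)$ --- is either a path or a cycle. I would establish this structural dichotomy first, since all three statements follow from a careful reading of it together with the standing assumptions (from the surrounding text) that $G$ is connected and $J(G) \neq \emptyset$.

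I would prove statement (3) next, as it is the cleanest and is reused below. Take $u \in N(P)$, so $u \notin P$ is adjacent to some $v \in P$. If $u$ were not a junction, then $u \in V \backslash J(G)$ and the edge $uv$ would lie in $G[V \backslash J(G)]$, forcing $u$ into the same component as $v$, i.e.\ $u \in P$ --- a contradiction. Hence $N(P) \subseteq J(G)$.

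For statements (1) and (2) I would first rule out the cycle case. If $P$ were a cycle, every vertex of $P$ would already have degree $2$ within $P$; since no vertex of $P$ is a junction, none could have a neighbor outside $P$, giving $N(P) = \emptyset$ and making $P$ a union of components of $G$. As $G$ is connected and $\emptyset \neq P \subsetneq V$ (properness holds because $P \subseteq V \backslash J(G)$ and $J(G) \neq \emptyset$), this is impossible, so $P$ is a path, say with endpoints $w_1, w_2$ (allowing $w_1 = w_2$ when $P$ is a single vertex). Every internal vertex of $P$ has two neighbors inside $P$ and degree at most $2$, so contributes nothing to $N(P)$; only $w_1, w_2$ can have neighbors outside $P$, each at most one. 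This yields $N(P) = (N(w_1) \cup N(w_2)) \backslash P$ (statement (1)) together with $|N(P)| \leq 2$. The lower bound $|N(P)| \geq 1$ is exactly the non-isolation of $P$ established while excluding cycles, completing statement (2).

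The only real care needed is bookkeeping in the degenerate configurations --- a single-vertex junction path, where $w_1 = w_2$ may contribute two outside neighbors yet the bound still holds, and the precise use of connectivity to forbid a cycle component. These are the places where the ``easily verified'' phrasing could hide an off-by-one or an unstated hypothesis, so I would state the assumptions that $G$ is connected and $J(G) \neq \emptyset$ explicitly and treat the single-vertex path as a named special case rather than folding it silently into the general path argument.
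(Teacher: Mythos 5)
Your proof is correct and follows essentially the same route as the paper's: bounding degrees in $V \backslash J(G)$ by $2$, observing that only the endpoints of the path can have outside neighbors, using the component definition for statement (3), and invoking connectivity together with $J(G) \neq \emptyset$ for the lower bound. Your explicit exclusion of the cycle case and the named single-vertex special case are slightly more careful than the paper's write-up, which labels $P$ as a path without comment, but the underlying argument is the same.
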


\begin{proof}
  To show this first statement, let $P \in \mathscr{P}(G)$. Let $p = |P|$ and $P$ be labeled $v_1, \dots, v_{p}$ such that $v_iv_{i+1} \in E$ for $i \in \{1, \dots, p-1\}$. Since $\deg(v) \leq 2$ for any $v \in P$ and $|N(v_i) \cap P| = 2$ for $i \in \{2, \dots, p-1\}$, $N(P) = (N(v_1) \backslash P) \cup (N(v_{p} \backslash P)) = (N(v_1) \cup N(v_{p})) \backslash P$.

  To show the second statement, suppose there exist vertices $u_1, u_2, u_3 \in N(P)$, distinct. Then $u_1, u_2, u_3$ are each adjacent to $v_1$ or $v_{p}$. If $p = 1$, then $\deg(v_1) \geq 3$. If not, then for at least one $j \in \{1, p\}$, $v_j$ has two neighbors in $\{u_1, u_2, u_3\}$ and one neighbor in $P$. In either case a vertex in $P$ has degree at least 3, a contradiction. Thus $|N(P)| \leq 2$. If $|N(P)| = 0$ then either $G$ is not connected or $J(G) = \emptyset$, thus $1 \leq |N(P)|$ holds.

  For the third statement, if there exist distinct $P_1, P_2 \in \mathscr{P}(G)$ and $v \in P_1, u \in P_2$ such that $vu \in E(G[V \backslash J(G)])$, then neither $P_1$ nor $P_2$ are components in $G[V \backslash J(G)]$, a contradiction. 
\end{proof}

Next we note an early observation in power domination literature, originally posed by Haynes et al.~\cite{haynes} as their Observation 4. Unfortunately, their statement did not include the necessary assumption that each component in $G$ has a degree 3 or higher vertex and is not correct as stated\footnote{The original statement does not hold for any graph that has a component with no vertices of degree 3 or more.} and a formal proof was not provided. We restate their observation in terms of $J(G)$ with an additional assumption and provide a formal proof.

\begin{proposition} \label{prop:J-set}
If $G$ is connected and $J(G)$ is non-empty, then there exists a power dominating set $S \subseteq J(G)$ such that $|S| = \gamma_P(G)$. 
\end{proposition}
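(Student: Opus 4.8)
The plan is to prove this by an exchange argument: I would take an arbitrary minimum power dominating set and show that any vertex lying in a junction path can be replaced by an adjacent junction without increasing the cardinality or destroying the power domination property. Concretely, choose $S$ to be a minimum power dominating set minimizing $|S \setminus J(G)|$, and argue this minimum must be $0$. The main tool is the monotonicity and idempotency of the closure: writing $Z(\cdot)$ for the zero forcing closure, the color change rules give $cl_P(S) = Z(N[S])$, and I will use the standard facts $A \subseteq Z(A)$, $A \subseteq B \Rightarrow Z(A) \subseteq Z(B)$, and $Z(Z(A)) = Z(A)$, together with the monotonicity of $N[\cdot]$.

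Next I would set up the swap. Suppose $v \in S \setminus J(G)$; then $v$ lies in some $P \in \mathscr{P}(G)$, and by Proposition \ref{prop:num-jp-neighbors} we have $\emptyset \neq N(P) \subseteq J(G)$, so I may fix a junction $w \in N(P)$ and define $S' := (S \setminus \{v\}) \cup \{w\}$. Writing $A := N[S \setminus \{v\}]$, we have $N[S] = A \cup N[v]$ and $N[S'] = A \cup N[w]$. It then suffices to prove the inclusion $N[v] \subseteq Z(A \cup N[w])$: combined with $A \subseteq Z(A \cup N[w])$, monotonicity and idempotency yield $V = Z(A \cup N[v]) \subseteq Z(Z(A \cup N[w])) = Z(A \cup N[w]) = cl_P(S')$, so that $S'$ is again power dominating.

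To establish $N[v] \subseteq Z(A \cup N[w])$ I would show that forcing from $w$ alone colors all of $P$ together with $N(P)$. Label $P = (v_1, \dots, v_p)$ with $w$ adjacent to the endpoint $v_1$; since $N[w]$ is colored in $Z(A \cup N[w])$, so is $v_1$. Because every vertex of $P$ has degree at most $2$ in $G$ and, by Proposition \ref{prop:num-jp-neighbors}, all external neighbors of $P$ are junctions, each $v_i$ has $v_{i+1}$ as its unique possibly-uncolored neighbor once $v_{i-1}$ (or $w$) is colored, so the forces $v_1 \to v_2 \to \cdots \to v_p$ proceed in turn and the endpoints then force their junction neighbors in $N(P)$. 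Hence $P \cup N(P) \subseteq Z(A \cup N[w])$, and since $N[v] \subseteq P \cup N(P)$ the inclusion follows; the degenerate cases $p = 1$ and $|N(P)| \in \{1,2\}$ are checked directly.

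Finally, the exchange step decreases $|S \setminus J(G)|$ by exactly one, since $v$ is removed and the junction $w$ is added. If $w \in S$ then $S'$ has strictly smaller cardinality, contradicting the minimality of $S$; otherwise $|S'| = |S| = \gamma_P(G)$ while $|S' \setminus J(G)| < |S \setminus J(G)|$, contradicting the choice of $S$. Either way we reach a contradiction unless $S \subseteq J(G)$, which proves the claim. I expect the main obstacle to be the bookkeeping in the swap lemma, namely verifying that the forcing chain along $P$ is never blocked (which relies squarely on $\deg(v_i) \leq 2$ and $N(P) \subseteq J(G)$ from Proposition \ref{prop:num-jp-neighbors}) and correctly handling the boundary cases where $v$ is an endpoint of $P$ or $P$ is a single pendant vertex.
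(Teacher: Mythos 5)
Your proposal is correct and follows essentially the same route as the paper: both proofs perform the same exchange, replacing a vertex $v$ on a junction path $P$ by a junction $w \in N(P)$ (guaranteed by Proposition \ref{prop:num-jp-neighbors}) and showing the forcing chain along $P$ recovers $N[v]$, after which the rest of the graph gets colored as before. The only differences are cosmetic: where the paper explicitly replays the original forcing sequence of $S$ vertex by vertex, you invoke the standard monotonicity and idempotency of the zero forcing closure together with $cl_P(S)=Z(N[S])$ (a cleaner packaging of the same induction), and you use an extremal choice of $S$ minimizing $|S\setminus J(G)|$ where the paper iterates the swap.
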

\begin{proof}
  Let $S$ be a power dominating set of $G$ such that $|S| = \gamma_P(G)$. Let $S_P = S \backslash J(G)$. If $S_P = \emptyset$ then the proposition holds. Assume that $S_P \neq \emptyset$. Then there exists a vertex $v \in S_p$ and a junction path $P$ containing $v$. By Proposition \ref{prop:num-jp-neighbors}, there must exist a junction $u \in N(P)$. Let $S' = (S \cup \{u\}) \backslash \{v\}$. To show that $S'$ is a power dominating set, we will apply the power domination color change rules. Color each vertex in $N[S']$, and let $C$ denote the set of colored vertices.

  Since $v \in N(P)$, $C$ must contain some $w_1 \in P$. Since $\deg(w_1) \leq 2$ so $w_1$ can have at most uncolored neighbor, if so call it $w_2$ and add $w_2$ to $C$. Likewise, if $w_2$ exists and $w_2 \in P$, $w_2$ must have at most one uncolored neighbor $w_3$, so if it exists add $w_3$ to $C$ as well. Since $v \in P$, this process can be repeated until each vertex in $N[v]$ has been added to $C$. Since $S$ is a power dominating set in $G$, there must exist a sequence $\mathcal{S}$ in which of $V(G)$ can be colored if $S$ is the initial colored set. Let $w$ be the first vertex in $\mathcal{S}$ but not in $C$, and $w'$ be the vertex that colors it. Since $w'$ proceeds $w$ and $w$ is assumed to be the first vertex in $\mathcal{S}$ but not in $C$, $w'$ must be in $C$. Similarly, each vertex in $N(w') \backslash \{w\}$ must proceed $w$ in $\mathcal{S}$ and thus must also be in $C$. Thus, $w'$ can also color $w'$ in the constructed sequence, and $w$ can be added to $C$. This can be repeated for each subsequent vertex $w$ in $\mathscr{S}$ not in $C$, inductively, until $C = V(G)$. $S'$ is thus also a minimum cardinality power dominating set. If $S' \backslash J(G) \neq \emptyset$, this entire process can be repeated until a minimum power dominating set in $G$ is obtained comprosed entirely of vertices in $J(G)$.
\end{proof}

Since each component can be considered separately and each component without a degree three or more vertex is a path or a cycle and can be easily resolved, we can assume without loss of generality that graphs we consider going forward are connected and contain at least one vertex of degree three or more. Proposition \ref{prop:J-set} guarantees that there exist solutions of a special form for Model 1. In later sections we show that it is computationally advantageous to exploit this observation by directly finding one of these solutions.

We continue by showing that for a graph $G$, the sets in $\mathscr{P}(G)$ and fort neighborhoods in $\mathscr{M}(G)$ do not cross. Instead, for any $P \in \mathscr{G}(P), M \in \mathscr{M}(G)$, either $M \cap P = \emptyset$ or $P \subset M$. 

\begin{lemma} \label{lem:jp-contained}
Let $G$ be a graph, $M \in \mathscr{M}(G), P \in \mathscr{P}(G)$. If $M \cap P \neq \emptyset$ then $N[P] \subseteq M$.  
\end{lemma}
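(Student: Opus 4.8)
The plan is to exploit the special structure of a junction path $P$: by Proposition~\ref{prop:num-jp-neighbors}, $P$ is an induced path $v_1 \cdots v_p$ in which every internal vertex $v_2, \dots, v_{p-1}$ has degree exactly $2$ with both neighbors in $P$, while each endpoint has at most one additional neighbor, lying in $J(G)$. Writing $M = N[F] = F \cup N(F)$ for a fort $F$, I would track, for each path vertex $v_j$, whether $v_j \in F$, whether $v_j \in N(F)\setminus F$, or whether $v_j \notin M$, and use the defining fort inequality $|N(v) \cap F| \ge 2$ for every $v \in N(F)$ to control how these labels can change along the path.

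First I would record the local dichotomy at an internal vertex: since $v_j$ has only the two neighbors $v_{j-1}, v_{j+1}$, the fort condition forces $v_j \in N(F)$ only if \emph{both} neighbors lie in $F$; consequently $v_j \notin M$ holds exactly when $v_{j-1}, v_j, v_{j+1}$ are all outside $F$. This yields a propagation principle: an internal vertex outside $M$ forces its internal neighbors outside $M$ as well.

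Next I would reduce to the case $F \cap P \neq \emptyset$. Assuming $M \cap P \neq \emptyset$, pick $x \in M \cap P$; if $x \in N(F)\setminus F$ then $x$ needs two neighbors in $F$, which is impossible for $p \geq 2$ since an interior vertex has no neighbor outside $P$ and an endpoint has at most one, so in fact some path vertex must already lie in $F$ (the degenerate single-vertex case $p=1$ I would dispatch directly, noting both neighbors of $v_1$ then lie in $F$). Having secured a vertex $v_i \in F \cap P$, I would propagate outward to both endpoints: walking toward $v_p$, I show by induction that no $v_j$ can be the first vertex to leave $M$, because its in-$M$ predecessor would then have to sit in $N(F)$ with two neighbors in $F$, contradicting that its successor is outside $F$; the endpoint $v_p$ itself is handled using the fort inequality together with its (at most one) junction neighbor. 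Symmetry then gives $P \subseteq M$.

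Finally I would upgrade $P \subseteq M$ to $N[P] \subseteq M$ by examining the two endpoints. For $v_1$ (and symmetrically $v_p$), either $v_1 \in F$, in which case its junction neighbor is adjacent to $F$ and hence in $M$, or $v_1 \in N(F)$, in which case the fort inequality forces both its neighbors --- including its junction neighbor --- into $F \subseteq M$. Since by Proposition~\ref{prop:num-jp-neighbors} the set $N(P) \subseteq J(G)$ consists precisely of these endpoint junction neighbors, this gives $N(P) \subseteq M$ and therefore $N[P] = P \cup N(P) \subseteq M$. I expect the main obstacle to be the endpoint bookkeeping in the propagation step: the inequality $|N(v)\cap F|\ge 2$ interacts delicately with whether an endpoint has degree one or two, and one must verify that an ``escape'' from $M$ at an endpoint would contradict $F$ actually being a fort.
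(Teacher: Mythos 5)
Your proposal is correct and follows essentially the same route as the paper's own proof: both arguments exploit that every vertex of a junction path has degree at most two, so the fort inequality $|N(v)\cap F|\ge 2$ forces membership (or non-membership) in $F$ to propagate along the path, yielding $P\subseteq M$ together with $N(P)\subseteq M$ via the endpoint analysis. The only difference is organizational --- you anchor the induction at a vertex of $F\cap P$ and walk outward, whereas the paper argues by contradiction from the endpoints inward --- but the underlying mechanism is identical.
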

\begin{proof}
  Let $G$ be a graph, $M \in \mathscr{M}(G), P \in \mathscr{P}(G)$ such that $M \cap P \neq \emptyset$. Let $F \in \mathscr{M}(G)$ be a fort such that $N[F] = M$, and let $p = |P|$ and $P$ be labeled $v_1, \dots, v_{p}$ such that for all $i \in \{1, \dots, p-1\}, v_iv_{i+1} \in E$. By Proposition \ref{prop:num-jp-neighbors}, $|N(P)| \geq 1$ and since $|N(v_i) \cap P| = 2$ for all $i$ such that $2 \leq i \leq p-1$, there exists $u \in N(P)$ adjacent to $v_1$ or $v_{p}$. Assume without loss of generality that $u$ is adjacent to $v_1$.

  If $u \in F$ or $v_1 \in F$ then clearly $v_1 \in M$, so assume neither $u$ nor $v_1$ are in $F$. Since $\deg(v_1) \leq 2$, if $v_1$ has a neighbor $u_1 \neq u$ such that $u_1 \in F$, $v_1$ would have a unique neighbor in $F$, which cannot occur. Thus, $\{v_1\} \cap M = \emptyset$. In the case $p = 1$, then $P \cap M = \emptyset$. In the case that $p>1$, then $v_2 = u_2$, and thus $v_2 \not \in F$. If $v_2$ has a neighbor $u_2 \neq v_1$ such that $u_2 \in F$, $v_2$ would have a unique neighbor in $F$, so $\{v_1, v_2\} \cap M = \emptyset$. This process can be applied inductively for each vertex in $P$, showing that $\{v_1, \dots, v_{p}\} \cap M = \emptyset$. This is a contradiction, so we conclude that it must be the case that $v_1 \in F$ or $u \in F$. Moreover, since we only needed to assume that $v_1$ was adjacent to a vertex $u \not \in P$, if $v_{p}$ is adjacent to some vertex $w \not\in P$ then at least of $w, v_{p}$ must also be in $F$. Thus, $N(P) \subseteq M$.

  Let $j \in \{2, \dots, p \}$. We note that if there does not exist a $v_j, v_{j-1}$ pair such that $\{v_j, v_{j-1}\} \cap F = \emptyset$, then each vertex in $P$ is either in $F$ or has a neighbor in $F$. Thus, in this case, $P \subseteq M$. We then consider the case that there does exist a $v_j, v_{j-1}$ pair such that $\{v_j, v_{j-1}\} \cap F = \emptyset$. Then since $\deg(v_{j-1}) \leq 2$, $v_{j-2}$ (with $v_{j-2} = u$ in the case that $j=2$) cannot be in $F$ as then $v_{j-1}$ would have a unique neighbor in $F$. As before, this process can be repeated with induction to show that $v_{j-3} \dots v_{1}, u$ are not in $F$, a contradiction. Thus each $v \in P$ is either in $F$ or has a neighbor in $F$, so $P \subseteq M$. Finally, $N[P] = P \cup N(P) \subseteq M$.
\end{proof}

Since $\{J(G)\} \cup \mathscr{P}(G)$ forms a partition of $V(G)$, Lemma \ref{lem:jp-contained} implies that for any fort neighborhood $M \in \mathscr{M}(G)$, $M$ is composed of a set of junctions in $J(G)$ and a (possibly empty) set of junction paths in $\mathscr{P}(G)$. Utilizing this result, Theorem \ref{thm:NS_FN} provides a characterization of fort neighborhoods in terms of these sets. 

\begin{theorem} \label{thm:NS_FN} Let $G$ be a graph, $M \subseteq V$ be non-empty. $M \in \mathscr{M}(G)$ if and only if there exist $J_M \subseteq J(G), \mathscr{P}_M \subseteq \mathscr{P}(G)$ such that $M = J_M \cup (\cup_{P \in \mathscr{P}_M} P)$ and the following statements hold:
\begin{enumerate}
\item For every $P \in \mathscr{P}_M$ and every $v \in N(P)$, $v \in J_M$. 
\item For every $u \in J_N := \{v \in J_m : N(v) \not\subseteq M\}, |(N(u) \backslash J_N) \cap M| \geq 2$.
\end{enumerate}
\end{theorem}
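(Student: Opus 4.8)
The plan is to prove both implications around a single candidate object. By Lemma \ref{lem:jp-contained}, whenever $M$ is a fort neighborhood the decomposition $M = J_M \cup (\cup_{P \in \mathscr{P}_M} P)$ with $J_M = M \cap J(G)$ and $\mathscr{P}_M = \{P \in \mathscr{P}(G) : P \cap M \neq \emptyset\}$ is forced, and moreover each $P \in \mathscr{P}_M$ satisfies $N[P] \subseteq M$. The whole argument then reduces to showing that the ``interior'' set $F := M \setminus J_N$ is a fort with $N[F] = M$ precisely when conditions 1 and 2 hold.

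For the forward direction, suppose $M = N[F]$ for a fort $F$ and take the decomposition above. Condition 1 is immediate: Lemma \ref{lem:jp-contained} gives $N[P] \subseteq M$ for each $P \in \mathscr{P}_M$, and Proposition \ref{prop:num-jp-neighbors} gives $N(P) \subseteq J(G)$, so $N(P) \subseteq M \cap J(G) = J_M$. For condition 2, the key observation is that every vertex of $F$ has all of its neighbors inside $M = N[F]$, so no vertex of $F$ lies in $J_N$; that is, $J_N \cap F = \emptyset$ and hence $J_N \subseteq N(F)$. The fort condition then applies to each $u \in J_N$, giving $|N(u) \cap F| \geq 2$, and since $F \subseteq M$ and $F \cap J_N = \emptyset$ we have $N(u) \cap F \subseteq (N(u) \setminus J_N) \cap M$, which yields condition 2.

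For the backward direction, assume the decomposition together with conditions 1 and 2, and set $F := M \setminus J_N$. First I would show $N[F] \subseteq M$ by verifying $N(v) \subseteq M$ for every $v \in F$: if $v$ lies on a junction path $P \in \mathscr{P}_M$, then its neighbors are in $P$ or in $N(P)$, and condition 1 places $N(P) \subseteq J_M \subseteq M$; if $v$ is a junction in $J_M \setminus J_N$, then $N(v) \subseteq M$ by the definition of $J_N$. Next I would show $M \subseteq N[F]$: the inclusion $F \subseteq N[F]$ is trivial, and for each $u \in J_N = M \setminus F$, condition 2 gives $|(N(u) \setminus J_N) \cap M| = |N(u) \cap F| \geq 2$, so $u \in N(F)$. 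Together these give $N[F] = M$ and $N(F) = J_N$, and the fort condition for each $u \in N(F) = J_N$ is then exactly condition 2. Finally, $F$ is non-empty, since $F = \emptyset$ would force $M = J_N$, and condition 2 applied to any $u \in J_N$ would then require $|N(u) \cap \emptyset| \geq 2$, a contradiction. Hence $F \in \mathscr{F}(G)$ with $N[F] = M$, so $M \in \mathscr{M}(G)$.

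The main obstacle is the backward direction, specifically identifying the correct fort to reconstruct: the insight that the interior $M \setminus J_N$ is the right choice, and that junction path vertices can always be absorbed into $F$ without creating neighbors outside $M$ — which is exactly where condition 1 and Lemma \ref{lem:jp-contained} are needed. The remaining delicate points are confirming non-emptiness of $F$ and that $N(F)$ is exactly $J_N$ rather than a proper subset, both of which hinge on condition 2.
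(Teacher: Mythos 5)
Your proof is correct and follows essentially the same route as the paper's: the same canonical decomposition $J_M = M \cap J(G)$, $\mathscr{P}_M = \{P : P \subseteq M\}$ via Lemma \ref{lem:jp-contained}, and the same candidate fort $F = M \setminus J_N$ in the backward direction, with the identity $(N(u)\setminus J_N)\cap M = N(u)\cap F$ doing the work in both directions. Your explicit check that $F \neq \emptyset$ (which the paper's proof leaves implicit) is a welcome addition.
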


\begin{proof}
  Let $G$ be a graph, $M \in \mathscr{M}(G)$. Then let $J_M := M \cap J(G)$ and $\mathscr{P}_M := \{P \in \mathscr{P}(G) : P \subset M\}$. By construction, $\{J_M\} \cup (\cup_{P \in \mathscr{P}_M} P) \subseteq M$. Consider any $v \in M$. Either $v \in J(G)$, or there exists a $P \in \mathscr{P}(G)$ such that $v \in P$. If $v \in J(G)$ then clearly $v \in J_M$. If instead there exists $P \in \mathscr{P}(G)$ such that $v \in P$, then $P \cap M \neq \emptyset$. By Lemma \ref{lem:jp-contained}, $P \subset M$ so $P \in \mathscr{P}_M$. Thus, $M \subseteq (J_M \cup (\cup_{P \in \mathscr{P}_M} P))$. By inclusion, $M = J_M \cup (\cup_{P \in \mathscr{P}_M} P)$.

  We first show that Statement 1 holds. Let $P \in \mathscr{P}_M$. Since $M$ is assumed to be a fort neighborhood, by Lemma \ref{lem:jp-contained} $N(P) \subset N[P] \subseteq M$. By Proposition \ref{prop:num-jp-neighbors}, $N(P) \subseteq J(G)$. Thus Statement 1 is satisfied as for all $v \in N(P)$, $v \in J_M$. Next we show that Statement 2 holds. Since $M \in \mathscr{M}(G)$, there exists some $F \in \mathscr{F}(G)$ such that $N[F] = M$. For any $u \in J_N$ has a neighbor not in $F$ or $N(F)$, so $u \in N(F)$. By inclusion, $J_n \subseteq N(F)$. Furthermore, $N[F] \backslash J_N = (N(F) \cup F) \backslash J_N = (N(F) \backslash J_N) \cup (F \backslash J_N) = (N(F) \backslash J_N) \cup F \supseteq F$. If we assume for that there exists $u \in J_N$ such that $|(N(u) \backslash J_N) \cap  M| \leq 1$, then since $F \subseteq N[F]\backslash J_N$,
  $$ |N(u) \cap F| \leq |N(u) \cap (N[F] \backslash J_N)| = |(N(u) \backslash J_N) \cap M| \leq 1. $$
  This is a contradiction as since $F$ is a fort and $u \in N(F)$, $|N(u) \cap F| \geq 2$. We conclude that $|(N(u) \backslash J_N) \cap M| \geq 2$ must hold for all $u \in J_N$, which is Statement 2. 

  Now we consider arbitrary $J_M \subseteq J(G)$, $\mathscr{P}_M \subseteq \mathscr{P}(G)$ such that Statements 1 and 2 hold. Let $M = J_M \cup (\cup_{P \in \mathscr{P}_M} P)$, and consider the vertex set $F = M \backslash J_N$. To complete our proof we must show that $M = N[F]$ and that $F \in \mathscr{F}(G)$. To see that $M = N[F]$, first consider an arbitrary $v \in N[F] = F \cup N(F)$. Since $F \subseteq M$, if $v \in F$ then clearly $v \in M$. If $v \in N(F)$, then $v$ must have a neighbor $u \in F$ such that either $u \in J_M \backslash J_N$ or $u \in P$ for some $P \in \mathscr{P}_M$. If $u \in P$ for some $P \in \mathscr{P}_M$, then by Statement 1, $v \in M$. If $u \in J_M \backslash J_N$, then $N(u) \subseteq M$ so $v \in M$. Thus all $v \in N[F]$ are in $M$, so $N[F] \subseteq M$

  Next, consider an arbitrary $v \in M$. Either $v \in J_M$, or there exists $P \in \mathscr{P}_M$ such that $v \in P$. In the latter case, $P \subset F$ thus $v \in F$. In the first case, either $v \in J_M \backslash J_N$ or $v \in J_N$. Additionally, if $v \in J_M \backslash J_N$, then $v \in F$. If instead $v \in J_N$ then by Statement 2, $|N(v) \cap F| = |N(v) \cap (M \backslash J_N)| = |(N(v) \backslash J_N) \cap M| \geq 2$, thus $v \in N(F)$. In all cases, $v \in N[F]$ so $M \subseteq N[F]$, and thus by inclusion $M = N[F]$.

  Lastly, the only case in which $v \in M$ and $v \in N(F)$ was the case in which $v \in J_N$. In this case we showed that $|N(v) \cap F| \geq 2$, thus by definition $F$ is a fort. 
\end{proof}

Using the conditions presented in Theorem \ref{thm:NS_FN}, some structural features of fort neighborhoods can be shown. In particular minimal fort neighborhoods must induce connected subgraphs.

\begin{theorem} \label{thm:M-conn}
  For any graph $G$ and any $M \in \mathscr{M}^0(G)$, $G[M]$ is connected.
\end{theorem}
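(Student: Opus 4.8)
The plan is to argue by contradiction. Suppose $M \in \mathscr{M}^0(G)$ but $G[M]$ is disconnected, and fix a fort $F \in \mathscr{F}(G)$ with $N[F] = M$. Since $M$ is minimal in $\mathscr{M}(G)$, it suffices to exhibit a fort neighborhood strictly contained in $M$ to reach a contradiction. My strategy is to split the fort $F$ across the connected components of $G[M]$ and show that each nonempty piece is itself a fort whose closed neighborhood lies inside a single component.

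First I would set up the decomposition. Let $C_1, \dots, C_k$ with $k \geq 2$ be the vertex sets of the components of $G[M]$, and for each $i$ put $F_i := F \cap C_i$. A preliminary observation is that $N(F) \subseteq M$ forces each component to be self-contained with respect to $F$: if $v \in N(F)$ is adjacent to some $u \in F$, then $u, v \in M$ are joined by an edge and hence lie in the same component. Using this I would first rule out the degenerate case that $F$ meets only one component, since if $F \subseteq C_1$ the observation gives $N[F] \subseteq C_1$, so $M = C_1$ is connected, a contradiction. Hence at least two of the $F_i$ are nonempty, and in particular any $F_i$ we work with can be taken nonempty.

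The core step is to show that each nonempty $F_i$ is a fort. Take $v \in N(F_i)$; then $v$ is adjacent to some $u \in F_i \subseteq F$, so $v \in M$, and since $u \in C_i$ the self-containment observation places $v \in C_i$. If $v$ were in $F$ it would lie in $F \cap C_i = F_i$, contradicting $v \in N(F_i)$; so $v \in N(F) \setminus F$, and the fort property of $F$ gives $|N(v) \cap F| \geq 2$. The key point is that every neighbor of $v$ lying in $F$ must in fact lie in $F_i$: such a neighbor is in $M$ and adjacent to $v \in C_i$, hence lies in $C_i$, hence in $F \cap C_i = F_i$. Therefore $N(v) \cap F = N(v) \cap F_i$ and $|N(v) \cap F_i| \geq 2$, so $F_i \in \mathscr{F}(G)$. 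The same self-containment argument shows $N[F_i] \subseteq C_i \subsetneq M$, so $N[F_i] \in \mathscr{M}(G)$ is a fort neighborhood properly contained in $M$, contradicting the minimality of $M$.

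The main obstacle I anticipate is the bookkeeping around open versus closed neighborhoods together with correctly handling the degenerate single-component case, in particular making sure the chosen $F_i$ is nonempty so that it is a legitimate fort, and that the containment $N[F_i] \subseteq C_i$ is strict. Everything else reduces to the single, repeatedly used fact that an edge of $G[M]$ cannot cross between two components, which is immediate once the decomposition is in place.
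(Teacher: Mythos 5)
Your proof is correct, but it takes a genuinely different route from the paper's. The paper argues through its characterization theorem (Theorem~\ref{thm:NS_FN}): it takes a component $M_1$ of $G[M]$, builds the associated junction set $J_{M_1}$ and junction-path collection $\mathscr{P}_{M_1}$, and verifies Statements 1 and 2 of that theorem to conclude $M_1 \in \mathscr{M}(G)$ --- a verification the authors themselves describe as ``straightforward, but tedious.'' You instead work directly with a fort $F$ satisfying $N[F]=M$, restrict it to the components via $F_i := F \cap C_i$, and use the single observation that an edge with both endpoints in $M$ cannot cross components to show $N(v)\cap F = N(v)\cap F_i$ for every $v \in N(F_i)$, whence each nonempty $F_i$ is a fort with $N[F_i] \subseteq C_i \subsetneq M$. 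Your argument is more elementary: it needs only the definition of a fort and bypasses the junction partition machinery entirely, and it transparently yields the paper's subsequent corollary (indeed, a small extension of your own observation shows every vertex of $C_i$ is in $F_i$ or adjacent to it, so $N[F_i] = C_i$ exactly and in particular every $F_i$ is automatically nonempty, making your degenerate single-component case vacuous rather than merely handled). What the paper's route buys is consistency: it exercises the characterization in Theorem~\ref{thm:NS_FN}, which is the language used throughout the computational sections. Both proofs are valid; yours is shorter and more self-contained.
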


\begin{proof}
  This statement can be shown by its contrapositive. Let $G$ be a graph, $M \in \mathscr{M}(G)$ such that $G[M]$ has at least two components with vertex sets $M_1$, $M_2$, respectively. We proceed by showing that $M_1$ satisfies the conditions laid out in Theorem \ref{thm:NS_FN}, and thus $M_1 \in \mathscr{M}(G)$. Let $J_{M_1} = M_1 \cap J(G)$, $P_{M_1} = \{P \in \mathscr{P}(G) : P \subset M_1\}$. It is straightforward, but tedious, to show that for any $P \in \mathscr{P}(G)$ contained in $M$, either $P \cap M_1 = \emptyset$ or $P \subset M_1$ is true, and thus $M_1 = J_{M_1} \cup ( \cup_{P \in \mathscr{P}_{M}} P)$. Since each $P \in \mathscr{P}_{M_1}$ is also contained in $M$, by Theorem \ref{thm:NS_FN}, any vertex $v \in N(P)$ must be in $M$ as well. Since $v \in N(P) \cap M$, and $P \subset M_1$, $v$ must also be in $M_1$. Thus, $M_1$ satisfies Statement 1 of Theorem \ref{thm:NS_FN}.

  Now focusing on Statement 2, consider $J_{N} := \{v \in J_{M}: N(v) \not \subseteq M\}$, and $J_{N_1} := \{v \in J_{M_1} : N(v) \not \subseteq M_1\}$. For any $v \in J_{N_1}$, $v \in J(G)$ and $v \in M$. Additionally, $v$ must be adjacent to a vertex $u \not \in M$, as any vertex in $M$ and adjacent to $v$ must also be in $M_1$. Thus $v \in J_N$. By Statement 2 of Theorem \ref{thm:NS_FN}, $|N(v)\backslash J_N) \cap M| \geq 2$. Again, every neighbor of $v$ that is in $M$ must also be in $M_1$, so $|(N(v) \backslash J_{N_1}) \cap M_1| \geq |(N(v) \backslash J_N) \cap M_1| = |(N(v) \backslash J_N) \cap M| \geq 2$. Since $M_1$ also satisfies Statement 2, by Theorem \ref{thm:NS_FN} $M_1$ is a fort neighborhood in $G$. Thus if $G[M]$ has more than component, $M$ is not a minimal in $\mathscr{M}(G)$. 
\end{proof}

As shown in the proof of Theorem \ref{thm:M-conn}, each component induced by a fort neighborhood is itself a fort neighborhood. We include this as the following corollary. 

\begin{corollary}
For any graph $G$, $M \in \mathscr{M}(G)$, if $C$ is a component in $G[M]$ then $V(C) \in \mathscr{M}(G)$.
\end{corollary}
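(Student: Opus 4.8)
The plan is to observe that the corollary is essentially already contained in the proof of Theorem~\ref{thm:M-conn}, and to extract that argument for a single arbitrary component. First I would dispose of the trivial case: if $G[M]$ is connected, then its unique component is $M$ itself, so $V(C) = M \in \mathscr{M}(G)$ holds by hypothesis. Hence I may assume $G[M]$ is disconnected, fix a component $C$, and set $M_1 := V(C)$. The goal is then to verify that $M_1$ satisfies the two conditions of Theorem~\ref{thm:NS_FN}, which is precisely what was done for ``one of the two components'' inside the proof of Theorem~\ref{thm:M-conn}; the only thing to check is that nothing in that verification used that $M$ had exactly two components.

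Next I would set $J_{M_1} := M_1 \cap J(G)$ and $\mathscr{P}_{M_1} := \{P \in \mathscr{P}(G) : P \subset M_1\}$. The decomposition $M_1 = J_{M_1} \cup (\cup_{P \in \mathscr{P}_{M_1}} P)$ follows once one knows that every junction path $P \subseteq M$ lies wholly inside a single component of $G[M]$, i.e.\ $P \cap M_1 = \emptyset$ or $P \subset M_1$; this is the ``straightforward but tedious'' fact invoked in Theorem~\ref{thm:M-conn} and is immediate from the connectedness of $P$ together with $N(P) \subseteq J(G)$ from Proposition~\ref{prop:num-jp-neighbors}. To check Statement~1, I would take $P \in \mathscr{P}_{M_1}$ and $v \in N(P)$; since $M$ itself satisfies Statement~1 (Theorem~\ref{thm:NS_FN} applied to $M$), we get $v \in M$, and since $v$ is adjacent to $P \subset M_1$ with $v \in M$, the vertex $v$ lies in the same component $C$, whence $v \in J_{M_1}$.

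For Statement~2 I would compare $J_N := \{v \in J_M : N(v) \not\subseteq M\}$ with $J_{N_1} := \{v \in J_{M_1} : N(v) \not\subseteq M_1\}$ and argue $J_{N_1} \subseteq J_N$: any $v \in J_{N_1}$ has a neighbor outside $M_1$, and because $M_1$ is an entire component of $G[M]$ this neighbor cannot lie in $M$ either, so $N(v) \not\subseteq M$. Applying Statement~2 for $M$ and using again that every neighbor of $v$ lying in $M$ in fact lies in $M_1$, I would obtain the chain $|(N(v) \backslash J_{N_1}) \cap M_1| \geq |(N(v) \backslash J_N) \cap M_1| = |(N(v) \backslash J_N) \cap M| \geq 2$, giving Statement~2 for $M_1$. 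By Theorem~\ref{thm:NS_FN}, $M_1 = V(C) \in \mathscr{M}(G)$.

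The main obstacle is not any single computation but stating cleanly the one structural fact that drives both conditions, namely that since $M_1 = V(C)$ is a full connected component of $G[M]$, no edge of $G$ joins $M_1$ to $M \backslash M_1$, so every neighbor of a vertex of $M_1$ that lies in $M$ actually lies in $M_1$. This is exactly what guarantees the neighbor counts are inherited unchanged from $M$ to $M_1$; once it is isolated, the verification of both statements is routine and reuses the estimates already established in Theorem~\ref{thm:M-conn}.
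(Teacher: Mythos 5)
Your proposal is correct and follows essentially the same route as the paper, which proves this corollary by pointing back to the argument inside Theorem~\ref{thm:M-conn}: there, an arbitrary component's vertex set is shown to satisfy both conditions of Theorem~\ref{thm:NS_FN} via exactly the decomposition $J_{M_1}$, $\mathscr{P}_{M_1}$ and the containment $J_{N_1} \subseteq J_N$ that you use. Your added observation that no edge of $G$ joins $V(C)$ to $M \backslash V(C)$ is precisely the fact the paper relies on implicitly, so your write-up is a faithful (and slightly more explicit) version of the paper's own argument.
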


\begin{figure}
\centering
\begin{tikzpicture}[scale=.75]
\begin{scope}[every node/.style={circle,draw,fill=white!40,minimum size=2mm}]
    \node (a) at (0,0) {};
    \node (b) at (0,3) {};
    \node (c) at (3,3) {};
    \node (d) at (3,0) {};
\end{scope}
\begin{scope}[every node/.style={circle,draw,fill=white!40,minimum size=2mm}]
    \node (ab) at (0,1.5) {};
    \node (ac) at (2.25,2.25) {};
    \node (ad) at (1.5,0) {};
    \node (bc) at (1.5,3) {};
    \node (cd) at (3,1.5) {};
    \node (bd) at (.75,2.25) {};
    \node (aa) at (-.75,-.75) {};
    \node (bb) at (-.75,3.75) {};
    \node (cc) at (3.75,3.75) {};
    \node (dd) at (3.75,-.75) {};
\end{scope}
\begin{scope}
  \draw[very thick] (a) to (aa)
  (a) to (ab)
  (a) to (ac)
  (a) to (ad)
  (b) to (bc)
  (b) to (bd)
  (c) to (cd)
  (b) to (ab)
  (b) to (bb)
  (d) to (cd)
  (c) to (cc)
  (d) to (dd)
  (c) to (bc)
  (c) to (ac)
  (d) to (ad)
  (d) to (bd);
\end{scope}
\end{tikzpicture}
\hspace{1cm}
\begin{tikzpicture}[scale=.75]    
\begin{scope}[every node/.style={circle,draw,fill=white!40,minimum size=2mm}]
	\node (11) at (0,	2) {};
	\node (12) at (1.175570505,	1.618033989) {};
	\node (22) at (1.902113033,	0.6180339887) {};
  \node (23) at (1.902113033,	-0.6180339887) {};
	\node (33) at (1.175570505,	-1.618033989) {};
	\node (34) at (0,	-2) {};
	\node (44) at (-1.175570505,-1.618033989) {};
	\node (45) at (-1.902113033,	-0.6180339887) {};
	\node (55) at (-1.902113033,	0.6180339887) {};
	\node (51) at (-1.175570505,	1.618033989) {};
    
  \node (1) at (0,	3) {};
	\node (2) at (2.853169549,	0.9270509831) {};
	\node (3) at (1.763355757,	-2.427050983) {};
	\node (4) at (-1.763355757,	-2.427050983) {};
	\node (5) at (-2.853169549,	0.9270509831) {};
    
	\node (13) at (0.3090169944,	0.9510565163) {};
	\node (24) at (1,	0) {};
	\node (35) at (0.3090169944,	-0.9510565163) {};
  \node (41) at (-0.8090169944,	-0.5877852523) {};
	\node (52) at (-0.8090169944,	0.5877852523) {};
    
  \draw[very thick] (1) to (11)
  (2) to (22)
  (3) to (33)
  (4) to (44)
  (5) to (55)

  (11) to (12)
  (12) to (22)
  (22) to (23)
  (23) to (33)
  (33) to (34)
  (34) to (44)
  (44) to (45)
  (45) to (55)
  (55) to (51)
  (51) to (11)
    
  (11) to (13)
  (13) to (33)
  (22) to (24)
  (24) to (44)
  (33) to (35)
  (35) to (55)
  (44) to (41)
  (41) to (11)
  (55) to (52)
  (52) to (22);
\end{scope}
\end{tikzpicture}
\caption{Examples of construction used in Proposition \ref{prop:fn-count}, for $k=4,5$}
\label{fig:fn-bound}
\end{figure}
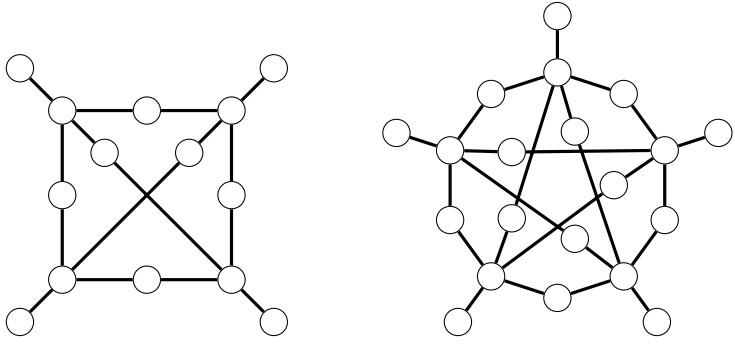

To complete this section, we now provide a proof for Proposition \ref{prop:fn-count}. \newline

\noindent
\textbf{Proposition \ref{prop:fn-count}.} \textit{\propThree}

\begin{proof}
  We verify this proposition with direct construction. Let $k \geq 3$, and consider the graph $G_k$ generated by the following procedure: let $G_k$ be the complete graph on $k$ vertices, for each edge $v_iv_j \in E$, delete the edge $v_iv_j$ and add vertex $v_{ij}$ to $V$ and edges $v_iv_{ij}, v_{ij}v_j$ to $E$. Finally, append a leaf $u_i$ to each vertex $v_i$ that was originally contained in $V(G_k)$. Examples of this construction are given in Figure \ref{fig:fn-bound}.

  Next, consider $M \subset V$ such that $M$ induces a cycle in $G_k$. By construction, for any $v \in V, \deg(v) \in \{1, 2, k\}$. Since no two vertices of the same degree are adjacent in $G_k$, $M$ must be made up of alternating degree 2 and degree $k$ vertices. Select $v_{ij} \in M$ such that $\deg(v_{ij}) = 2$ and update $M$ by removing $v_{ij}$ and adding leaves $u_i$ and $u_j$ to $M$. $J_M = \{v \in M : \deg(v) \geq 3\}$ and $\mathscr{P}_M = \{\{v\} : v \in M, \deg(v) < 3\}$ satisfy the conditions of Theorem \ref{thm:NS_FN}, so $M \in \mathscr{M}(G_k)$. Equivalently, $F = \{v \in M : \deg(v) < 3\} \in \mathscr{F}(G_k)$ and $N[F] = M$. 

  We show that $M$ is a minimal fort neighborhood by contradiction. Assume that $M$ is not minimal in $\mathscr{M}(G_k)$. Then there exists $M^0 \in \mathscr{M}(G_k)$ such that $M^0 \subset M$ and $M^0$ is minimal in $\mathscr{M}(G_k)$. Since $G_k[M]$ is a path and $G[M^0]$ must be connected (by Theorem \ref{thm:M-conn}), $G[M^0]$ must also be a path. Since each $F \in \mathscr{F}(G_k)$ consists of at least 2 vertices, $|M^0| > 1$. $G_k[M^0]$ must have 2 leaves. If the leaves in $G_k[M^0]$ are also leaves in $G_k$, then $G_k[M]$ could not be a path since $M^0$ is a strict subset of $M$. This cannot be the case, so there must exist a vertex $v \in M^0$ which is a leaf in $G_k[M^0]$ but not in $G_k$. Since $M^0 \in \mathscr{M}(G_k)$ however, there must exist $J_{M^0} \subseteq J(G)$, $\mathscr{P}_{M^0} \subseteq \mathscr{P}(G_k)$ that satisfy Theorem \ref{thm:NS_FN} and partition $M^0$. Since $|N(v) \cap M| < |N(v)|$, $v$ cannot be contained in any $P \in \mathscr{P}_{M^0}$. Then it must be that case that $v \in J_{M^0}$. Since $N(v) \not\subseteq M^0$, $|(N(v) \backslash J_N) \cap M| \geq 2$ must hold if Statement 2 of Theorem \ref{thm:NS_FN} is satisfied. This leads to a contradiction however as $|(N(v) \backslash J_N) \cap M| \leq |N(v) \cap M| = 1$. $M^0$ cannot satisfy Theorem \ref{thm:NS_FN}, so we conclude that $M$ is a minimal fort neighborhood. $M \in \mathscr{M}^0(G_k)$. 

  A straightforward counting argument shows that for a given $k$, there are $\sum^k_{\ell=3} \binom{k}{\ell} \frac{\ell!}{2}$ distinct choices for $M$; $G_k$ contains $\binom{k}{\ell}\frac{(\ell-1)!}{2}$ distinct cycles of length $2\ell$ for $\ell \in \{3, \dots, k\}$ and for a cycle of length $2\ell$ there are $\ell$ possible choices of $M$. Each choice of $M$ is a distinct minimal fort neighborhood, thus each $M$ is contained in $\mathscr{M}^0(G_k)$. Therefore, 
  $$
  |\mathscr{M}^0(G_k)|
  \geq \sum^k_{\ell=3} \binom{k}{\ell} \frac{\ell!}{2}
  = \frac{1}{2} \sum^k_{\ell=0} \left(\frac{k!}{(k-\ell)!}\right) - \frac{1}{2}(k^2+1).
  $$
  Making use of the factorial identity $\lim_{n \to \infty} \sum^n_{i=0} \frac{1}{i!} = e$,
  $$ \lim_{k \to \infty} \frac{|\mathscr{M}^0(G_k)|}{k!} \geq \lim_{k \to \infty} \frac{ \frac{1}{2} \sum^{k}_{\ell = 0} \left(\frac{k!}{(k-\ell)!}\right)}{k!} - \frac{\frac{1}{2} (k^2+1)}{k!} = \frac{e}{2} - 0 > 0. $$
Thus, $|\mathscr{M}^0(G_k)| = \Omega(k!)$ as $k \to \infty$. Since $n_k = \Theta(k^2)$, $|\mathscr{M}^0(G_k)| = \Omega((\sqrt{n_k})!)$, as $k \to \infty$. 
\end{proof}

For a large graph $G$, $|\mathscr{M}^0(G)|$ may also be very large. Proposition \ref{prop:fn-count} provides a lower bound on this number, but this bound may not be tight asymptotically. Instead, there may exist a sequence of graphs $\{G_k\}^\infty_{k=1}$ such that $|\mathscr{M}^0(G_k)| = \Omega(2^{n_k})$ as $k \to \infty$. However, with this bound we have shown that both $|\mathscr{M}^0(G_k)|$ grows asymptotically faster than polynomial and sub-exponential functions\footnote{By sub-exponential functions, we specifically mean the set of functions $O(2^{\sqrt{n}})$.}. Thus in general, enumerating all of the constraints considered in Model 1 can significantly hinder the runtime performance of computational methods. 

\section{Detection of Fort Neighborhoods} \label{sec:comp}

As defined in Section \ref{sec:structural}, Model 1 can be used to find the power domination number of a general graph $G$. In practicality, this model must be solved by constraint generation and depends on efficient methods for the detection of fort neighborhoods. As is standard in constraint generation strategies, a subset $\mathscr{M}_0 \subset \mathscr{M}(G)$ is generated and a fractional solution to the Linear Program (LP) relaxation of Model 1 is found. Then, a violated constraint in the form of (\ref{cons:1}) is identified and the corresponding fort neighborhood $M$ is added to $\mathscr{M}_0$.

In this section, we discuss the computational complexity of detecting these violated constraints, and provide two computational methods by which violated constraints can be practically identified.

\subsection{Computational Complexity}
The minimum weighted fort neighborhood decision problem, \textsc{Min-M}, is stated as follows: \newline

\noindent
\textbf{Problem}: \textsc{Min-M} \\
\textbf{Instance}: Graph $G$, Function $w:V \to [0,1]$, Integer $k$ \\
\textbf{Question}: Does there exist $M \in \mathscr{M}(G)$ such that $\sum_{v\in M} w(v) < k$? \\

When solving Model 1 with a constraint generation strategy, \textsc{Min-M} is the subproblem that must be solved to find violated constraints. In this context, $k=1$ and the weight function $w$ is the map between vertices in $V$ and their values in a fractional solution to the master problem.

\begin{theorem}
\textsc{Min-M} is in $\mathcal{NP}$.
\end{theorem}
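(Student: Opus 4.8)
The plan is to establish membership in $\mathcal{NP}$ the standard way: exhibit a certificate of polynomial size together with a deterministic verifier that runs in polynomial time. The natural certificate is a fort $F \subseteq V(G)$ witnessing the fort neighborhood. Since $F$ is just a subset of $V$, its encoding has size $O(n)$, comfortably polynomial in the input length.

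The verifier would proceed in three steps, in order. First, it confirms $F \neq \emptyset$ and that $F$ satisfies the defining property of a fort directly from the definition: compute $N(F)$, and for each $v \in N(F)$ check that $|N(v) \cap F| \geq 2$. Each such count is a scan over $v$'s adjacency list, so this whole step runs in time polynomial in $n$ and $m$. Second, it computes the closed neighborhood $M = N[F] = F \cup N(F)$, again in polynomial time. Third, it evaluates $\sum_{v \in M} w(v)$ and tests whether this sum is strictly less than $k$. If every check passes, then $M \in \mathscr{M}(G)$ and $M$ has the required weight, so the instance is a ``yes'' instance; conversely, any ``yes'' instance has, by definition of $\mathscr{M}(G)$, some fort $F$ with $N[F] = M$ of weight below $k$, which is a valid accepting certificate. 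As an alternative, one could take the set $M$ itself as the certificate and verify $M \in \mathscr{M}(G)$ through the necessary and sufficient conditions of Theorem \ref{thm:NS_FN}: construct $J_M = M \cap J(G)$ and $\mathscr{P}_M = \{P \in \mathscr{P}(G) : P \subseteq M\}$, then check Statement 1 and Statement 2, all of which are polynomial-time tests over the junction vertex partition. Either certificate works; the fort $F$ is the more direct of the two.

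The only point that requires any care at all — and the closest thing to an obstacle in an otherwise routine argument — is the arithmetic in the weight test. Because in the constraint-generation setting $w$ arises as a fractional solution to the LP relaxation of Model 1, its values are rationals whose encoding length is polynomial in the input; under this standard assumption the summation $\sum_{v \in M} w(v)$ and its comparison against $k$ are polynomial-time operations, so the verifier is polynomial overall. The genuinely substantive difficulty for \textsc{Min-M} lies in the complementary hardness direction rather than here; membership in $\mathcal{NP}$ follows immediately once the polynomial-size certificate and polynomial-time verifier are in hand.
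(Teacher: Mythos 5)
Your proof is correct, and your primary route differs from the paper's in a small but genuine way. The paper takes the candidate set $M$ itself as the certificate and verifies $M \in \mathscr{M}(G)$ by invoking the characterization of Theorem~\ref{thm:NS_FN}: it computes $J_M = M \cap J(G)$ and $\mathscr{P}_M = \{P \in \mathscr{P}(G) : P \subseteq M\}$, checks that these partition $M$, and then checks Statements 1 and 2. You instead certify with a fort $F$ and verify the fort property directly from its definition ($|N(v) \cap F| \geq 2$ for each $v \in N(F)$), then set $M = N[F]$ and test the weight. Your version is more elementary --- it needs nothing beyond the definition of a fort, whereas the paper's verifier leans on the correctness of Theorem~\ref{thm:NS_FN}; the paper's version has the mild aesthetic advantage that the certificate is the very object the decision problem quantifies over, and it showcases that the characterization is itself checkable in linear time (which the authors reuse implicitly when arguing \textsc{Restricted-Min-M} is in $\mathcal{NP}$). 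Your secondary suggestion is exactly the paper's argument. The remark about the bit-length of the rational weights is a point the paper glosses over entirely, so including it is a small improvement rather than a digression. Both verifiers are sound and complete and run in polynomial time, so there is no gap.
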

\begin{proof}
  To show \textsc{Min-M} is in $\mathcal{NP}$, consider an arbitrary $\textsc{Min-M}$ instance $\left<G, w, k\right>$ and candidate solution $M \subseteq V$. In linear time it can be determined if $\sum_{v \in M} w(v) < k$ holds. In order to determine if $M \in \mathscr{M}(G)$, we consider the sets $J_M := M \cap J(G)$, $\mathscr{P}_M := \{P \in \mathscr{P}(G) : P \subseteq M\}$, verify that $M = \{J_M\} \cup (\cup_{P \in \mathscr{P}_M} P)$ and that Statements 1 and 2 of Theorem \ref{thm:NS_FN} hold. If $M \neq \{J_M\} \cup (\cup_{P \in \mathscr{P}_M} P)$, then by Lemma \ref{lem:jp-contained} $M \not \in \mathscr{M}$. If either Statement 1 or 2 does not hold then by Theorem \ref{thm:NS_FN}, $M \not \in \mathscr{M}$. In either of these cases $\left< G, w, k\right>$ is determined to be a '$no$' instance. If instead the lemma is satisfied and both statements hold, then $M$ satisfies the conditions of Theorem \ref{thm:NS_FN} and thus $M \in \mathscr{M}$ so $\left<G, w, k \right>$ can be evaluated as a '$yes$' instance.

  The sets $J(G), \mathscr{P}(G), J_M, \mathscr{P}_M$ can all be calculated in linear time, and both Statements 1 and 2 of Theorem \ref{thm:NS_FN} can also be verified in linear time. Thus $\left< G, w, k\right>$ can always be evaluated in a polynomial number of steps. 
\end{proof}

We next state a related problem in which a given vertex is required to be in the identified fort neighborhood. We regard this problem as \textsc{Restricted-Min-M}, the restricted minimum weight fort neighborhood problem. \\

\noindent
\textbf{Problem}: \textsc{Restricted-Min-M} \\
\textbf{Instance}: Graph $G$, Vertex $v$, Function $w:V \to [0,1]$, Integer $k$ \\
\textbf{Question}: Does there exist $M \in \mathscr{M}(G)$ such that $\sum_{v\in M} w(v) < k, v \in M$? \\

This second problem is also interesting from a complexity standpoint as it may be possible to create a heuristic for \textsc{Min-M} that finds violated constraints using a local search protocol or by iteratively growing a vertex set. Unfortunately, the following theorem shows that such methods that solve this alternative problem exactly may scale poorly with problem size, due to the complexity of \textsc{Restricted-Min-M}.

\begin{theorem} \label{thm:fort-comp}
\textsc{Restricted-Min-M} is NP-Complete. 
\end{theorem}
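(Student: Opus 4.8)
The plan is to establish the two requirements for NP-completeness separately: membership in $\mathcal{NP}$ and NP-hardness.

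First, membership in $\mathcal{NP}$ follows almost immediately from the argument already used for \textsc{Min-M}. Given an instance $\langle G, v, w, k\rangle$ and a candidate $M \subseteq V$, I would verify in linear time that $\sum_{u\in M} w(u) < k$, that $v \in M$, and --- using the characterization of Theorem \ref{thm:NS_FN} --- that $M \in \mathscr{M}(G)$ by computing $J_M := M \cap J(G)$ and $\mathscr{P}_M := \{P \in \mathscr{P}(G) : P \subseteq M\}$ and checking both that $M = J_M \cup (\cup_{P\in\mathscr{P}_M} P)$ and that Statements 1 and 2 hold. The single extra check, $v \in M$, costs only linear time, so the whole verification remains polynomial and \textsc{Restricted-Min-M} $\in \mathcal{NP}$.

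Second, for NP-hardness I would give a polynomial reduction from a known NP-complete problem; I would use $3$-SAT (positive $1$-in-$3$ SAT or \textsc{Exact-Cover-by-3-Sets} are reasonable alternatives that may yield cleaner gadgets). The key simplification is to restrict attention to $0/1$ weights with $k = 1$: then $\sum_{u\in M} w(u) < 1$ holds exactly when $M$ avoids every weight-$1$ vertex, so the decision problem reduces to asking whether some fort neighborhood contains the designated vertex $v$ while avoiding a forbidden set $W$ (the weight-$1$ vertices). Working with Theorem \ref{thm:NS_FN} rather than with forts directly, I would build a graph in which $v$ seeds a fort neighborhood and the fort-neighborhood conditions force the combinatorial choices of the source instance: variable gadgets realized as small junction-path/junction structures admitting exactly two consistent ways to be included in a fort neighborhood (encoding true/false), and clause gadgets arranged so that Statement 2 --- every boundary junction $u \in J_N$ must retain at least two neighbors inside $M$ --- can be satisfied without touching $W$ if and only if the corresponding clause is satisfied. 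The forbidden set $W$ and the placement of $v$ are chosen so that a zero-weight fort neighborhood containing $v$ must traverse the gadgets consistently.

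Correctness would then be argued in both directions against this construction. For the forward direction I would show that a satisfying assignment yields a concrete choice of $J_M$ and $\mathscr{P}_M$ meeting Statements 1 and 2 with $v \in M$ and $M \cap W = \emptyset$, after which Theorem \ref{thm:NS_FN} certifies $M \in \mathscr{M}(G)$. For the reverse direction I would show that any fort neighborhood containing $v$ and disjoint from $W$ must select one literal per variable gadget in a consistent way --- here the degree-two propagation behavior of junction paths (as in Lemma \ref{lem:jp-contained}) together with the two-support requirement of Statement 2 do the work of ruling out inconsistent inclusions --- and that avoidance of $W$ at the clause gadgets reads off a satisfying assignment. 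The main obstacle I anticipate is gadget design: the fort-neighborhood conditions are global and rigid (a fort neighborhood that enters a junction path must contain all of it, and every boundary junction demands two interior supports), so the gadgets must be engineered so that these forced propagations encode exactly the intended logical constraints. Verifying that no spurious low-weight fort neighborhood exists that short-circuits the encoding --- that is, soundness of the reduction --- is where most of the care will be required.
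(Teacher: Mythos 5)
Your $\mathcal{NP}$-membership argument is correct and matches the paper's: verify $v \in M$, the weight bound, and $M \in \mathscr{M}(G)$ via the characterization of Theorem \ref{thm:NS_FN}, all in polynomial time. You have also chosen the same source problem (\textsc{3-SAT}) and the same broad strategy as the paper: use the weights to forbid a designated vertex from $M$ (forcing the required vertex into $N(F)$) and let the ``at least two neighbors in $F$'' condition propagate choices through variable and clause gadgets.

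However, the hardness half is a plan rather than a proof --- you never construct the gadgets, and you acknowledge that gadget design and soundness are exactly where the difficulty lies. Worse, the specific simplification you propose ($0/1$ weights with $k=1$, i.e.\ reducing the weight condition to a forbidden set $W$) discards the one tool the paper uses for the crucial exclusivity constraint. In the paper's construction, each $\hat v_i$ lies in $N(F)$ and needs a second neighbor in $F$, forcing \emph{at least} one of $v_i,\bar v_i$ into $F$; the bound ``at most one per variable'' then comes from assigning weight $1$ to every literal vertex and capping the total weight at $k$, so by pigeonhole exactly one literal per variable is selected. All fort-neighborhood conditions are lower bounds (closure plus ``$|N(u)\cap F|\ge 2$''), so there is no evident structural gadget that imposes an \emph{upper} bound such as ``not both $v_i$ and $\bar v_i$.'' Without that exclusivity, an unsatisfiable formula could still admit a zero-weight fort neighborhood simply by placing every literal vertex in $F$, and the reduction would be unsound. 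Switching to positive $1$-in-$3$ SAT does not help, since ``exactly one'' again requires an upper-bound mechanism. To close the gap you either need to retain non-uniform weights as a counting device (as the paper does) or invent a genuinely new mutual-exclusion gadget, and neither is supplied.
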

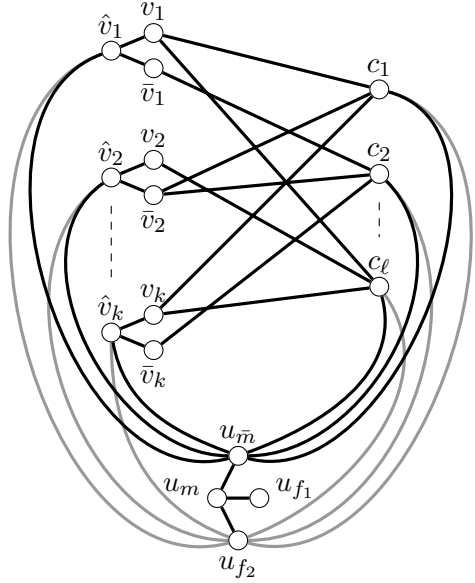
\begin{figure}
  \centering
  \begin{tikzpicture}[scale=.75]
\def     \cxpos{2} % Horizontal Spacing
\def    \cxposi{\cxpos   + 3}
\def   \cxposii{\cxposi  + .875}
\def  \cxposiii{\cxposii + 1.5}

\def \cyposi{0} % Vertical Line Spacing
\def \cyposii{-1.5}
\def \cyposiii{-3.5}

\def \vxpos{-2} % Horizontal Spacing
\def \vhatxpos{\vxpos - .75}

\def \vyposj{1} % Vertical Line Spacing
\def \vyposjj{-1.25}
\def \vyposjjj{-4}

\def \vbaryposj{\vyposj - .625}
\def \vbaryposjj{\vyposjj - .625}
\def \vbaryposjjj{\vyposjjj - .625}

\def \vhatyposj{\vyposj - .3125}
\def \vhatyposjj{\vyposjj - .3125}
\def \vhatyposjjj{\vyposjjj - .3125}

\def \vhatxposi{\vhatxpos - 3}
\def \vhatxposii{\vhatxposi - .875}
\def \vhatxposiii{\vhatxposii - 1.5}

\def \wxpos{-0.5}
\def \wypos{-6.5}
\def \wmanxpos{\wxpos - .375}
\def \wmanypos{\wypos - .75}
\def \wnautxpos{\wmanxpos + .75}
\def \wnautypos{\wypos - .75}
\def \wforbxpos{\wxpos}
\def \wforbypos{\wypos - 1.5}
\def \opa{.4}

\begin{scope}[every node/.style={circle,draw,fill=white!40,minimum size=2.5mm, inner sep=0pt, label distance = .025cm}]
    \node[label={[rectangle]:$c_1$}] (c1)   at (\cxpos, \cyposi) {};
    
    \node[label={[rectangle]:$c_2$}] (c2)   at (\cxpos, \cyposii) {};
    
    \node[label={[rectangle]:$c_{\ell}$}] (c3)   at (\cxpos, \cyposiii) {};
    
    \node[label={[rectangle]:$v_1$}] (v1)                                at (\vxpos,    \vyposj) {};
    \node[label={[rectangle, yshift=-.05cm]below:$\bar v_{1}$}] (v1-bar) at (\vxpos,    \vbaryposj) {};
    \node[label={[rectangle]:$\hat v_{1}$}]   (v1-hat)                   at (\vhatxpos,  \vhatyposj) {};
    
    \node[label={[rectangle]:$v_2$}] (v2)                                at (\vxpos,    \vyposjj) {};
    \node[label={[rectangle, yshift=-.05cm]below:$\bar v_{2}$}] (v2-bar) at (\vxpos,    \vbaryposjj) {};
    \node[label={[rectangle]:$\hat v_{2}$}]      (v2-hat)                at (\vhatxpos,  \vhatyposjj) {};
    
    \node[label={[rectangle]:$v_k$}] (v3)                                at (\vxpos,    \vyposjjj) {};
    \node[label={[rectangle, yshift=-.05cm]below:$\bar v_{k}$}] (v3-bar) at (\vxpos,    \vbaryposjjj) {};
    \node[label={[rectangle]:$\hat v_{k}$}]      (v3-hat)                at (\vhatxpos,  \vhatyposjjj) {};
    
    \node[label={[rectangle]:$u_{\bar{m}}$}] (u-out) at (\wxpos, \wypos) {};
    \node[label={[rectangle, xshift=-.05cm, yshift=.15cm]left  : $u_m$}] (u-man) at (\wmanxpos, \wmanypos) {};
    \node[label={[rectangle, xshift= .05cm, yshift=.125cm]right : $u_{f_1}$}] (f-2) at (\wnautxpos, \wnautypos) {};
    \node[label={[rectangle, yshift=-.05cm]below : $u_{f_2}$}] (f-1) at (\wforbxpos, \wforbypos) {};
\end{scope}

% Begin Lines
\begin{scope}

    \draw[very thick] (v1-hat) to (v1)
                      (v1-hat) to (v1-bar)
                      (v2-hat) to (v2)
                      (v2-hat) to (v2-bar)
                      (v3-hat) to (v3)
                      (v3-hat) to (v3-bar)
                      
                      (c1) to (v1)
                      (c1) to (v2-bar)
                      (c1) to (v3)
                      (c2) to (v1-bar)
                      (c2) to (v2-bar)
                      (c2) to (v3-bar)
                      (c3) to (v1)
                      (c3) to (v2)
                      (c3) to (v3)
                      
                      (u-out) to (u-man)
                      (u-man) to (f-1)
                      (u-man) to (f-2);
                      
    \path[very thick] (v1-hat) edge[out=200,   in=195] (u-out);
    \path[very thick] (v2-hat) edge[out= 215,  in=180] (u-out);
    \path[very thick] (v3-hat) edge[out= 280,  in=160] (u-out);
    \path[very thick] (c1)     edge[out=-15,   in=-15] (u-out);
    \path[very thick] (c2)     edge[out=-45,   in=0]   (u-out);
    \path[very thick] (c3)     edge[out=-75,   in= 20]  (u-out);
    
    \path[very thick, opacity = \opa] (v1-hat) edge[out=200,   in=195] (f-1);
    \path[very thick, opacity = \opa] (v2-hat) edge[out= 215,  in=180] (f-1);
    \path[very thick, opacity = \opa] (v3-hat) edge[out= 270,  in=160] (f-1);
    \path[very thick, opacity = \opa] (c1)     edge[out=-15,   in=-15] (f-1);
    \path[very thick, opacity = \opa] (c2)     edge[out=-45,   in=0]   (f-1);
    \path[very thick, opacity = \opa] (c3)     edge[out=-55,  in= 20]  (f-1);

    \draw[thin, dashed, shorten >=15pt, shorten <=7pt] (c2) -- (c3);
    \draw[thin, dashed, shorten >=15pt, shorten <=7pt] (v2-hat) -- (v3-hat);
\end{scope}

\end{tikzpicture}
  \caption{Example of the construction used in Theorem \ref{thm:fort-comp}.}
\label{fig:fort-comp}
\end{figure}
\begin{proof}
  To show \textsc{Restricted-Min-M} is in $\mathcal{NP}$ consider an arbitrary instance, $\left<G, v, w, k \right>$, and a candidate solution $M$. Then $\left<G, v, w, k \right>$ is a '$yes$' instance of \textsc{Restricted-Min-M} if and only if $\left< G, w, k \right>$ is a '$yes$' of \textsc{Min-M} and $v \in M$. Since each of these can be determined in a polynomial number of operations, \textsc{Restricted-Min-M} is in $\mathcal{NP}$.

  To show that \textsc{Restricted-Min-M} is also $\mathcal{NP}$-hard, we provide a polynomial reduction to the well known \textsc{3-SAT} problem. Consider an instance of \textsc{3-SAT}, $\left<f\right>$, where $f$ is a boolean formula in conjunctive normal form, with variables $V_1, \dots, V_k$ and clauses $C_1, \dots, C_\ell$ each consisting of 3 variable assignments. Then let $G' = (V', E')$, where $V' := (\cup_{i=1}^k \{v_i, \bar{v}_i, \hat{v}_i\}) \cup \{c_i\}_{i=1}^\ell \cup \{u_m, u_{\bar{m}}, u_{f_1}, u_{f_2}\}$ and $E' := \{v_ic_j : V_i \in C_j\} \cup \{\bar{v}_i c_j : \bar{V}_i \in C_j\} \cup (\cup^k_{i=1} \{\hat{v}_iv_i, \hat{v}_i \bar{v}_i, \hat{v}_iu_{\bar{m}}, \hat{v}_iu_{f_2}\}) \cup (\cup^\ell_{i=1} \{c_iu_{\bar{m}}, c_iu_{f_2}\}) \cup \{u_{\bar{m}}u_m, u_mu_{f_2}, u_mu_{f_1}\}$. The constructed graph $G'$ contains $\ell+3k+4$ vertices and $5\ell+4k+3$ edges; it can be constructed in a number of steps which is linear with respect to the size of the input $f$. A visualization for this construction is provided as Figure \ref{fig:fort-comp}. In addition, we define a weighting function $w':V(G') \to [0,1]$ such that
  $$ w'(v) =
  \begin{cases}
    \begin{aligned}
    k+1 &\text{ if } v = u_{\bar{m}} \\
    1   &\text{ if } v = v_i \text{ or } v = \bar{v}_i \text{ for some } i \:\: .\\
    %% 1   \text{ if } v \in \cup^k_{i=1}\{v_i, \bar{v}_i\} \\
    0   &\text{ else}
    \end{aligned}
  \end{cases} $$
  We complete the proof by showing that $f$ is satisfiable if and only if $\langle G', u_m, w',$ $k+1 \rangle$ is '$yes$' instance of \textsc{Restricted-Min-M}. First, assume that $f$ is satisfiable, with some variable assignment $\dot{V}_1, \dots \dot{V}_k$ where $\dot{V}_i$ is either equal to $V_i$ or $\bar{V}_i$ for $i \in \{1, \dots, k\}$. Then let $F := \{u_{f_1}, u_{f_2}\} \cup \{v_i : \dot{V}_i = V_i\} \cup \{\bar{v}_i : \dot{V}_i = \bar{V}_i\}$. It is straight forward to show that $F$ is a fort, and has closed neighborhood $M := N[F] = F \cup \{u_m\} \cup \{\hat{v}_i\}_{i=1}^k \cup \{c_i\}_{i=1}^\ell$. Thus $M \in \mathscr{M}(G')$, $u_m \in M$, and $\sum_{v \in M} w'(v) = k < k+1$, so $\left<G', u_m, w', k+1 \right>$ is indeed a '$yes$' instance of \textsc{Restricted-Min-M}.

  Alternatively, assume that $\left< G', u_m, w', k+1 \right>$ is a '$yes$' instance of \textsc{Restr-icted-Min-M}. Then there exists $M \in \mathscr{M}(G)$, $F \in \mathscr{F}(G)$ such that $M = N[F]$, $u_m \in M$, $\sum_{v \in M} w'(v) < k+1$. Since $w'(w_{\bar{m}}) = k+1$, $w_{\bar{m}} \not \in M$ so $u_m \in N(F)$. Since $u_m \in N(F)$, $u_m$ must have at least two neighbors in $F$, thus $u_{f_1}, u_{f_2} \in F$. For each $i$ then, $\hat{v}_i \in N[F]$. Like $u_m$, since each $\hat{v}_i$ is also adjacent to $u_{\bar{m}}$, $\hat{v}_i$ cannot be in $F$ itself and instead must be contained in $N(F)$. Since each vertex in $N(F)$ must have two neighbors in $F$, for each $i$, at least one of $\hat{v}_i$'s two remaining neighbors, namely $v_i$ and $\bar v_i$, must be in $F$. By the pigeonhole principle, if any pair $v_i, \bar v_i$ are both in $F$ then since $M$ is assumed to have total weight less than or equal to $k$, then there exists a pair $v_j, \bar v_j$ such that neither vertex is contained in $F$. This cannot be the case so we see that for any $i \in \{1, \dots, k\}$, either $v_i \in F$ or $\bar v_i \in F$. Using $F$ then, we can construct a variable assignment $\dot V_1, \dots, \dot V_k$ such that for each $i \in \{1, \dots, k\}$, $\dot V_i = V_i$ if $v_i \in F$ or $\dot V_i = \bar V_i$ if $\bar v_i \in F$.

  Next, we consider the set of clause vertices, $c_1, \dots, c_\ell$. As before, since each clause vertex is adjacent to a vertex in $F$ and vertex not in $M$, $\{c_1, \dots, c_\ell\}$ $\subset N(F)$. Since each vertex in $N(F)$ must be adjacent to at least two vertices in $F$, each clause vertex must have at least one adjacent variable assignment vertex in $F$. Since by construction a clause vertex is adjacent to a variable assignment vertex in $G'$ if and only if that variable assignment literal appears in the corresponding clause in $f$, each clause $C_1$ is satisfied by at least one variable assignment literal in $\dot V_1, \dots, \dot V_k$. $f$ is satisfied by variable assignments $\dot V_1, \dots, \dot V_k$, so $f$ is satisfiable. 
\end{proof}

\subsection{Computational Methods}
In this section, we provide two approaches for separating violated constraints in Model 1. The first method we present finds fort neighborhoods of minimum weight through the use of an IP. Unfortunately this method scales poorly in larger instances. While few constraints have been incorporated into Model 1 many fort neighborhoods of may have weight 0. Without a tie breaking criterion, large vertex sets may be added to the master problem. When this occurs, a shallow violated constraint is added and little progress is made towards optimality. The second method we present avoids this by identifying minimum cardinality fort neighborhoods with total weight less than 1. We compare the performance of these methods in Figure \ref{tab:sep-results}, noting the improved performance of the cardinality based approach.

We also provide a technique for initializing the constraint set of Model 1. In practice, power grid graphs are often sparse and certain special structures can be identified quickly to generate tight constraints. 

\subsubsection{Minimum Weight Fort Neighborhoods}
Utilizing the characterization of fort neighborhoods given by Theorem \ref{thm:NS_FN}, an IP can be used to identify minimum weight fort neighborhoods in general weighted graphs. To simplify notation, given a graph $G$ and weight function $w:V \to [0,1]$, we define $w_v = w(v)$ for all $v \in J(G)$ and $w_P = \sum_{v \in P} w(v)$ for all $P \in \mathscr{P}(G)$. \newline

\noindent \text{\textbf{Model 2: Min Weight Fort Neighborhood}}
{\small
  \begin{align}
    \nonumber\min \:&\sum_{v \in J(G)} w_v M_v  + \sum_{P \in \mathscr{P}(G)} w_P F_P \\ 
    \text{s.t. }  \:&\sum_{v \in J(G)} M_v + {\sum_{P \in \mathscr{P}(G)} F_P \geq 1} \label{cons:2} \\
        & F_v \leq M_u  & \forall v \in J(G), \forall u \in J(G): v \in N[u]           \label{cons:3} \\
        & F_v \leq F_P  & \forall v \in J(G), \forall P \in \mathscr{P}(G): v \in N(P) \label{cons:4} \\
				& F_P \leq M_v  & \forall P \in \mathscr{P}(G), \forall v \in N(P)                \label{cons:5} \\
        & 2(M_v - F_v) \leq \sum_{\substack{u \in J(G): \\ v \in N(u)}} F_u + \sum_{P \in \mathscr{P}(G)} |N(v) \cap P| \: F_P  & \forall v \in J(G) \label{cons:6} \\
    \nonumber & F_P, F_v, M_v \in \{0,1\} & \forall v \in J(G), \forall P \in \mathscr{P}(G)		
\end{align}}

We formally prove the correctness of Model 2 in Theorem \ref{thm:FN-IP}, but first we provide the intuition behind the model's variables and constraints. As shown by Lemma \ref{lem:jp-contained}, fort neighborhoods can be partitioned into a set of junctions and junction paths. Variables $M_v$ indicate the junctions in a candidate fort neighborhood $M$, and variables $F_v$, $F_P$ indicate the junctions and junction paths in the interior of $M$, which we show is a fort. Since Theorem \ref{thm:NS_FN} requires junction paths contained in fort neighborhoods to be in the interiors of those fort neighborhoods, we do not need to include corresponding $M_P$ variables. The candidate fort neighborhood identified by the model, $M$, is indicated by the $M_v$ and $F_P$ variables. Thus, feasible solutions to Model 2 correspond to members of $\mathscr{M}(G)$, while the model's objective function ensures that optimal solutions are minimum weight members of $\mathscr{M}(G)$. 

Constraint (\ref{cons:2}) ensures $M$ is non-empty. Constraint (\ref{cons:3}) ensures junctions indicated to be in the interior of $M$ are also indicated to be in $M$. Along with Constraint (\ref{cons:4}), Constraint (\ref{cons:3}) also ensures that each junction in the interior of $M$ is not adjacent to any junction or junction path not in $M$. Constraint (\ref{cons:5}) ensures that junctions adjacent to any junction path in $M$ are also contained in $M$, and Constraint (\ref{cons:6}) ensures that any junction in $M$, but not the interior of $M$, is adjacent to at least two vertices in the interior of $M$.

\begin{theorem}\label{thm:FN-IP}
For any graph $G$, weight function $w$, and optimal solution of Model 2, $M = \{v \in J(G) : M_v = 1\} \cup \{v \in P : P \in \mathscr{P}(G), F_P = 1\}$ is a minimum weight fort neighborhood in $G$. 
\end{theorem}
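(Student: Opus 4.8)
The plan is to establish a weight-preserving correspondence between feasible solutions of Model 2 and the fort neighborhoods of $G$, and then read off the claim from optimality. Throughout I write $J_M := \{v \in J(G) : M_v = 1\}$ and $\mathscr{P}_M := \{P \in \mathscr{P}(G) : F_P = 1\}$, so that $M = J_M \cup (\cup_{P \in \mathscr{P}_M} P)$ by definition, and I let $F := \{v \in J(G): F_v = 1\} \cup \{v \in P : P \in \mathscr{P}_M\}$ denote the interior indicated by the $F$-variables. The two directions I need are: any feasible solution yields a set $M \in \mathscr{M}(G)$ via the characterization of Theorem \ref{thm:NS_FN}, with objective value equal to $w(M)$; and every fort neighborhood is realized by some feasible solution with objective equal to its weight.

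For the forward direction I would verify Statements 1 and 2 of Theorem \ref{thm:NS_FN} for the canonical pair $(J_M, \mathscr{P}_M)$. Constraint (\ref{cons:5}) is exactly Statement 1. For Statement 2 the key preliminary step is to show, using Constraints (\ref{cons:3}) and (\ref{cons:4}), that any junction $v$ with $F_v = 1$ satisfies $N(v) \subseteq M$; consequently $F \subseteq M \backslash J_N$, where $J_N := \{v \in J_M : N(v) \not\subseteq M\}$, and in particular $F_v = 0$ for every $v \in J_N$. I would then rewrite the right-hand side of Constraint (\ref{cons:6}) as $|N(v) \cap F|$ (the first sum counts the junction-neighbors of $v$ lying in $F$, the second counts the path-vertex neighbors of $v$ lying in $F$), so that for $v \in J_N$, where $M_v - F_v = 1$, the constraint gives $|N(v) \cap F| \geq 2$; combined with $F \subseteq M \backslash J_N$ this yields $|(N(v)\backslash J_N) \cap M| \geq 2$, which is Statement 2. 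Constraint (\ref{cons:2}) guarantees $M \neq \emptyset$, so Theorem \ref{thm:NS_FN} applies and $M \in \mathscr{M}(G)$. Since $J(G)$ and the junction paths partition $V$, the objective evaluates to $\sum_{v \in J_M} w_v + \sum_{P \in \mathscr{P}_M} w_P = w(M)$ exactly.

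For the converse I would start from an arbitrary $M' \in \mathscr{M}(G)$, take the canonical witnesses $J_{M'} = M' \cap J(G)$ and $\mathscr{P}_{M'} = \{P \in \mathscr{P}(G) : P \subseteq M'\}$ guaranteed by Theorem \ref{thm:NS_FN}, and set $M_v = 1$ on $J_{M'}$, $F_P = 1$ on $\mathscr{P}_{M'}$, and $F_v = 1$ exactly on $J_{M'} \backslash J_{N'}$ with $J_{N'} := \{v \in J_{M'} : N(v) \not\subseteq M'\}$. Checking (\ref{cons:2})--(\ref{cons:5}) is routine from Statement 1, Lemma \ref{lem:jp-contained}, and the definition of $J_{N'}$; for (\ref{cons:6}) I would observe that this particular choice makes the indicated interior satisfy $F = M' \backslash J_{N'}$ exactly, so its right-hand side equals $|(N(v)\backslash J_{N'}) \cap M'|$ and the constraint reduces to Statement 2 on $J_{N'}$ and to $0 \leq (\text{nonnegative})$ elsewhere. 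This solution again has objective $w(M')$. Combining the two directions, the feasible region of Model 2 and $\mathscr{M}(G)$ attain equal minimum weights, and the set $M$ produced by an optimal solution realizes that minimum, which proves the theorem.

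I expect the main obstacle to be the careful reconciliation of the auxiliary $F$-variables with the theorem's fort $M \backslash J_N$: the forward direction only delivers the one-sided containment $F \subseteq M \backslash J_N$ (the $F$-variables need not mark every interior vertex of a feasible solution), and the argument must show this containment already suffices to push Constraint (\ref{cons:6}) into Statement 2, whereas the converse must exhibit the specific assignment that turns the containment into an equality. Getting the right-hand-side counting identity $\sum_{u : v \in N(u)} F_u + \sum_P |N(v) \cap P|\, F_P = |N(v) \cap F|$ correct and tracking the boundary set $J_N$ consistently across both directions is where the bookkeeping is most delicate.
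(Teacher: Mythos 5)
Your proposal is correct and follows essentially the same route as the paper's proof: a weight-preserving two-way correspondence between feasible solutions of Model 2 and $\mathscr{M}(G)$ via the characterization in Theorem \ref{thm:NS_FN}, with the right-hand side of Constraint (\ref{cons:6}) interpreted as a count of interior neighbors. Your explicit observation that the forward direction only yields the containment $F \subseteq M \backslash J_N$ (so that Constraint (\ref{cons:6}) gives Statement 2 via an inequality rather than the equality the paper asserts there) is in fact a slightly more careful treatment of the one delicate step.
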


\begin{proof}
  To verify this theorem it suffices to show that for each feasible solution for Model 2, $M = \{v \in J(G) : M_v = 1\} \cup \{v \in P : P \in \mathscr{P}(G), F_P = 1\} \in \mathscr{M}(G)$, and that for each $M \in \mathscr{M}(G)$, there exists a feasible solution for Model 2 such that vertices in $M$ are indicated by the $M_v$ and $F_P$ variables. Since the objective function of Model 2 is simply the total weight of $M$, a feasible solution is optimal if and only if it corresponds to a minimum weight fort neighborhood in $G$.

  First, let $\{M_v\}_{v \in J(G)}, \{F_v\}_{v \in J(G)}, \{F_P\}_{P \in \mathscr{P}(G)}$ be a feasible solution for Model 2, and define $J_M = \{v \in J(G) : M_v = 1\}$, $\mathscr{P}_M = \{P \in \mathscr{P}(G) : F_P = 1\}$, and $M = J_M \cup (\cup_{P \in \mathscr{P}_M} P)$. We prove that $M \in \mathscr{M}(G)$ by showing that $M, J_M,$ and $\mathscr{P}_M$ satisfy the sufficient conditions laid out in Theorem \ref{thm:NS_FN}. Since Constraint (\ref{cons:2}) is satisfied $J_M$, $\mathscr{P}_M$ cannot both be empty, so $M$ cannot be empty. Additionally, for any $v \in J(G)$ such that $v \in N(P)$, since Constraint (\ref{cons:5}) is satisfied: $M_v \geq F_P = 1$. Thus $v \in J_M$. The first condition of Theorem \ref{thm:NS_FN} is satisfied.

  Next we verify the second condition of Theorem \ref{thm:NS_FN}. Consider the set $J_N = \{v \in J_M : N(v) \not\subseteq M\}$. $J_N \subseteq J_M$, so for any $v \in J_N$, $M_v = 1$ must hold. Since $v \in J_N$, there must also exist a vertex $u \in N(v)$ such that $u \not \in M$. Either $u \in J(G)$ or there exists some $P' \in \mathscr{P}(G)$ such that $u \in P$. If $u \in J(G)$ then since Constraint (\ref{cons:3}) is satisfied and $M_u = 0$, $F_v \leq M_u = 0$. If instead there exists $P' \in \mathscr{P}(G)$ such that $u \in P'$ then since $F_{P'} = 0$ and Constraint (\ref{cons:4}) is satisfied, $F_v \leq F_{P'} = 0$. In all cases, if $v \in J_N$ then $M_v = 1$ and $F_v = 0$ must hold. Using this, we prove an identity for all $v \in J(G)$,
  $$ |(N(v) \backslash J_N) \cap M| = \sum_{\substack{u \in J(G): \\ v \in N(u)}} F_u + \sum_{P \in \mathscr{P}(G)} |N(v) \cap P| \: F_P. $$
  Since $J_M$ and each set in $\mathscr{P}_M$ are disjoint the following holds.
  \begin{equation*}
    \begin{split}
      |(N(v) \backslash J_N) \cap M| =& |(N(v) \backslash J_N) \cap (J_M \cup (\cup_{P \in \mathscr{P}_M} P))| \\
      =& |(N(v) \backslash J_N) \cap J_M| + |(N(v) \backslash J_N) \cap (\cup_{P \in \mathscr{P}_M} P)| \\
      =& |N(v) \cap (J_M \backslash J_N)| + |N(v) \cap (\cup_{P \in \mathscr{P}_M} P)|. 
    \end{split}
  \end{equation*}
  Additionally for any $u \in J(G)$ such that $F_u = 1$ holds, since Constraint (\ref{cons:3}) is satisfied and $u \not\in J_N$, $u$ must be in $J_M \backslash J_N$. Thus,
  $$ |N(v) \cap (J_M \backslash J_N)| = \sum_{u \in N(v) \cap J(G)} F_u = \sum_{\substack{u \in J(G): \\ v \in N(u)}} F_u. $$  
  Then since each $P \in \mathscr{P}(G)$ is disjoint,
  $$ |N(v) \cap (\cup_{P \in \mathscr{P}_M} P)| = \sum_{P \in \mathscr{P}_M} |N(v) \cap P| = \sum_{P \in \mathscr{P}(G)} |N(v) \cap P| \: F_P. $$
  Returning to our first equation, we see that if $v$ is also assumed to be in $J_N$, then Constraint (\ref{cons:6}) ensures that
  $$ |(N(v) \backslash J_N) \cap M| = \sum_{\substack{u \in J(G): \\ v \in N(u)}} F_u + \sum_{P \in \mathscr{P}(G)} |N(v) \cap P| \: F_P \geq 2(M_v - F_v) = 2. $$
Thus the second condition of Theorem \ref{thm:NS_FN} is also satisfied. 

Next we show that for any fort neighborhood $M \in \mathscr{M}(G)$, there exists a corresponding feasible solution for Model 2. Consider any $M \in \mathscr{M}(G)$ and for any $v \in J(G)$, let $M_v = 1$ if $v \in M$ and let $F_v = 1$ if $N[v] \subseteq M$. For all $P \in \mathscr{P}(G)$, let $F_P = 1$ if $P \subseteq M$. Lastly, set any unassigned $M_v, F_v, F_P = 0$. Since $M$ cannot be empty, Constraint (\ref{cons:2}) is satisfied. For Constraints (\ref{cons:3}), (\ref{cons:4}), and (\ref{cons:5}), we note that if the left hand side is equal to 0, then the constraint holds vacuously. It suffices then to consider only the cases which the left hand side is equal to 1. By construction, for any $v \in J(G)$, $F_v = 1$ if and only if $N[v] \subseteq M$. For any $u \in J(G): v \in N[u]$, $u$ must be in $M$ so $M_u = 1$. Thus, Constraint (\ref{cons:3}) must hold. Similarly, for any $w \in N(v)$ such that there exists a junction path $P$ containing $w$, $w$ must be in $M$. By Lemma \ref{lem:jp-contained}, since $P \cap M \neq \emptyset$, $P \subseteq M$. Thus $F_P = 1$, so Constraint (\ref{cons:4}) must also be satisfied. Next, consider any $P \in \mathscr{P}(G)$ such that $F_P = 1$, and $v \in N(P)$. By construction, $P \subseteq M$. By the first condition of Theorem \ref{thm:NS_FN}, $v \in M$ so $M_v = 1$; Constraint (\ref{cons:5}) holds.

Lastly, we consider Constraint (\ref{cons:6}). Let $v \in J(G)$. In the case that $(M_v-F_v) = 0$, then since the right hand side is always non-negative the constraint must hold. Thus, we need only consider the case that $M_v = 1$ and $F_v = 0$. In this case, by the way $M_v, F_v$ were constructed, $v \in M$ and $N(v) \not\subseteq M$. We note that for any $u \in J(G)$, $u \in J_N$ if and only if $M_u - F_u = 1$ (equivalently, $M_u = 1$ and $F_u = 0$). By the second condition of Theorem \ref{thm:NS_FN}, $|(N(v) \backslash J_N) \cap M| \geq 2$. Thus, by the counting identity shown before, 
$$ 2 \leq |(N(v) \backslash J_N) \cap M| = \sum_{\substack{u \in J(G): \\ u \in N(v)}} F_u + \sum_{P' \in \mathscr{P}(G)} |N(v) \cap P'| F_{P'}. $$
Constraint (\ref{cons:6}) holds, thus every $M \in \mathscr{M}(G)$ corresponds to a feasible solution for Model 2, with an objective value equal to the total weight of the vertices in $M$.
\end{proof}

\subsubsection{Minimum Cardinality Fort Neighborhoods}
As an alternative to identifying minimum weight fort neighborhoods, we next present Model 3. This model is a modification of Model 2 and identifies a minimum cardinality fort neighborhood with total weight less than 1, when model parameter $\epsilon$ is chosen to be a small positive number. \newline

\noindent \text{\textbf{Model 3: Min Cardinality Fort Neighborhood}}
{\small
  \begin{align}
    \nonumber\min \:&\sum_{v \in J(G)} M_v  + \sum_{P \in \mathscr{P}(G) |P|\:F_P} \\ 
    \nonumber\text{s.t. }  \:&\sum_{v \in J(G)} M_v + {\sum_{P \in \mathscr{P}(G)} F_P \geq 1}  \\
        & F_v \leq M_u  & \forall v \in J(G), \forall u \in J(G): v \in N[u] \nonumber  \\
        & F_v \leq F_P  & \forall v \in J(G), \forall P \in \mathscr{P}(G): v \in N(P) \nonumber \\
				& F_P \leq M_v  & \forall P \in \mathscr{P}(G), \forall v \in N(P) \nonumber    \\
        & 2(M_v - F_v) \leq \sum_{\substack{u \in J(G): \\ v \in N(u)}} F_u + \sum_{P \in \mathscr{P}(G)} |N(v) \cap P| \: F_P  & \forall v \in J(G) \nonumber \\
        &\sum_{v \in J(G)} w_v M_v  + \sum_{P \in \mathscr{P}(G)} w_P F_P \leq 1-\epsilon \label{cons:7} \\
    \nonumber & F_P, F_v, M_v \in \{0,1\} & \forall v \in J(G), \forall P \in \mathscr{P}(G)		
\end{align}}

Since the constraints enforced in Model 3 include all of the constraints given in Model 2, any feasible solution of Model 3 is also a feasible solution for Model 2 and thus by Theorem \ref{thm:FN-IP} corresponds to the vertex set of a fort neighborhood. Constraint (\ref{cons:7}) ensures that this fort neighborhood has total vertex weight less than 1. In a graph with vertex weights given by a fractional solution for our master problem, Model 1, any feasible solution for Model 3 provides a violated constraint for Model 1.

By minimizing the cardinality of the fort neighborhood, the inequality that is introduced to the master problem is more difficult to satisfy, thereby imposing a greater restriction on the set of feasible solutions. In our computational experiments, we found that this approach for separating out violated constraints outperformed the strategy using Model 2 as outlined in the previous section.

\subsubsection{Detecting Fort Neighborhoods of Special Forms} \label{sec:M-init}
In addition to adding constraints to Model 1 through the use of Model 2 and 3, fort neighborhoods of special forms can often be efficiently identified and included in Model 1 as an initial constraint set. While these forms of fort neighborhoods may do not appear in all graphs, they occur frequently in graph representations of power grids and can be exploited to benefit computation practice.
\begin{figure}
	\centering
  \hfill
\def \yoff {1}
\def \xoff {2}
\begin{tikzpicture}[scale=.75]
  \begin{scope}[every node/.style={circle,draw,minimum size=2mm}]
  \node[label={[rectangle]:}] (v1) at (0, \yoff) {};
  \node[label={[rectangle]:}] (v2) at (1*\xoff, \yoff) {};

  \node[label={[rectangle]:}] (v3)  at (0, -1*\yoff) {};
  \node[label={[rectangle]:}] (v4)  at (1*\xoff, -1*\yoff) {};
  \node[label={[rectangle]:$v$}] (v0) at (-.5*\xoff, 0) {};
  \node[label={[rectangle]:}] (v) at (-1*\xoff, 0) {};
  \end{scope}
  \begin{scope}
    \draw[dashed, very thick]
    (v1) to node[above]{$P_1$} (v2)
    (v3) to node[above]{$P_2$} (v4);
    \draw[very thick]
    (v)  to (v0)
    (v0) to (v1)
    (v0) to (v3);
  \end{scope}
\end{tikzpicture}
\hfill
\begin{tikzpicture}[scale=.75]
  \begin{scope}[every node/.style={circle,draw,minimum size=2mm}]
  \node[label={[rectangle]:}] (v1) at (0, \yoff) {};
  \node[label={[rectangle]:}] (v3)  at (0, -1*\yoff) {};
  \node[label={[rectangle]:$v$}] (v0) at (-.5*\xoff, 0) {};
  \node[label={[rectangle]:}] (v) at (-1*\xoff, 0) {};
  \end{scope}
  \begin{scope}
    \draw[dashed, very thick]
    (v1) to node[right]{$P_1$} (v3);
    \draw[very thick]
    (v)  to (v0)
    (v0) to (v1)
    (v0) to (v3);
  \end{scope}
\end{tikzpicture}
\hfill
\begin{tikzpicture}[scale=.75]
  \begin{scope}[every node/.style={circle,draw,minimum size=2mm}]
  \node[label={[rectangle]:}] (v1) at (0, \yoff) {};
  \node[label={[rectangle]:}] (v2) at (1*\xoff, \yoff) {};

  \node[label={[rectangle]:}] (v3)  at (0, -1*\yoff) {};
  \node[label={[rectangle]:}] (v4)  at (1*\xoff, -1*\yoff) {};
  \node[label={[rectangle]:$v$}] (v0) at (-.5*\xoff, 0) {};
  \node[label={[rectangle]:}] (u0) at (2*\xoff, 0) {};
  \node[label={[rectangle]:}] (v) at (-1*\xoff, 0) {};
  \node[label={[rectangle]:$u$}] (u) at (1.5*\xoff, 0) {};
  \end{scope}
  \begin{scope}
    \draw[dashed, very thick]
    (v1) to node[above]{$P_1$} (v2)
    (v3) to node[above]{$P_2$} (v4);
    \draw[very thick]
    (v)  to (v0)
    (v0) to (v1)
    (v0) to (v3)
    (u)  to (u0)
    (v2) to (u)
    (v4) to (u);
  \end{scope}
\end{tikzpicture}
\hfill
  \caption{\footnotesize Illustrations of fort neighborhoods of type I, II, and III (from left to right, respectively). In each example, $v,u$ are junction vertices and $P_1, P_2$ are junction paths. Fort neighborhoods of these special forms can be quickly detected and included as an initial constraint set for Model 1.}
    \label{fig:init-M}
\end{figure}
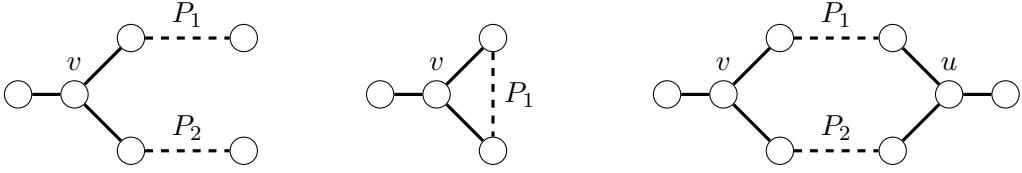

We consider fort neighborhoods of three types which can be easily described for a graph $G$ in terms of $J(G)$ and $\mathscr{P}(G)$. A fort neighborhood is of the first type if its vertices consist of a junction $v$, and two junction paths $P_1$, $P_2$ such that $N(P_1) = N(P_2) = \{v\}$. A fort neighborhood is of the second type if its vertices consist of a junction $v$ and a junction path $P$ such that $N(P) = \{v\}$ and $|N(v) \cap P| = 2$. Lastly, a fort neighborhood is of the third type if its vertices can be partitioned into junctions $v, u$ and junction paths $P_1, P_2$ such that $N(P_1) = N(P_2) = \{v, u\}$. Fort neighborhoods of these three special types are illustrated in Figure \ref{fig:init-M}.

In the computational experiments presented in Section \ref{sec:results}, we calculate $J(G), \mathscr{P}(G)$, and initialize the constraint set of Model 1 in a process similar to graph component identification. First, $J(G)$ is determined by iterating through the vertices of $G$. After $J(G)$ has been found, $\mathscr{P}(G)$ is determined by recording the vertex sets of the components of $H = G[V\backslash J(G)]$. When a junction path $P$ is detected, its neighborhood in $G$ is searched and fort neighborhoods of type II can be identified. If a fort neighborhood of type II is detected, then the corresponding constraint can be added to Model 1. If not, then since $P$ must either be adjacent to 1 or 2 junctions, the junctions $P$ is adjacent to can be recorded in an upper triangular $|J(G)|\times|J(G)|$ adjacency matrix like structure. If two junction paths are each adjacent to only one junction, then a fort neighborhood of type I is identified, and likewise if a pair junctions are adjoined by two distinct junction paths then a fort neighborhood of type III is identified. To prevent the accumulation of overlapping constraints, for any junction only the first fort neighborhood of type I or II is added to the initial constraint set of Model 1 and for any pair of junctions only the first fort neighborhood of type III is added.

We note that there are other structures that appear regularly in graph representations of electrical networks, and that these three special types are in no way exhaustive of easily recognizable fort neighborhoods. Possible directions of future work include the incorporation of additional commonly occurring yet small fort neighborhood structures.

\section{Computational Results} \label{sec:results}
In this section, we compare the runtime performance of the proposed method with existing methods. To the best knowledge of the authors, the IP proposed by Brimkov et al.~\cite{conn-pd} is the current state of the art computational methods for power domination, demonstrating superior runtime performance in comparison to the other so called infection model seen in power domination literature~\cite{aazami}. In the first part of this section, we reproduce the infection model proposed by Brimkov et al. \cite{conn-pd} and discuss a modification, motivated by Proposition \ref{prop:J-set}, which reduces the set feasible solutions permitted by the model. In the latter portion of this section, we compare the performance of both of these methods to the separation algorithm proposed in this paper.

\subsection{Power Domination Infection Model Modification} \label{sec:inf-model}
The IP introduced by Brimkov et al. \cite{conn-pd} identifies the power domination numbers of general graphs. To do so, at an optimal solution of the IP, the variables of the model correspond to both a minimum power dominating set as well as a sequence in which the initially uncolored vertices can be colored. For an arbitrary graph $G' = (V, E')$, the digraph $G = (V, E = \{uv : uv \in E(G')$ or $vu \in E(G')\})$ is constructed. The following IP is then defined using the auxiliary graph $G$. \newline

\noindent \text{\textbf{Model 4: Power Domination Infection Model} \cite{conn-pd}}
{\small
  \begin{align}
    \min \:         &\sum_{v \in V} s_v \nonumber \\ 
    \text{s.t. } \: & s_v + \sum_{uv \in E} y_{uv}  =  1   & \forall v \in V \nonumber \\
                    & x_u - x_v + (n+1) y_{uv}  \leq n   & \forall uv \in E \nonumber \\
                    & x_w - x_v + (n+1) y_{uv}  \leq n + (n+1)s_u & \forall uv \in E, \forall w \in N(u) \backslash \{v\} \nonumber \\
                    & x_v \in \{0, 1, \dots, n\}, s_v \in \{0,1\} & \forall v \in V, \nonumber \\
                    & y_{uv} \in \{0,1\}             & \forall uv \in E \nonumber
\end{align}}

In their paper, Brimkov et al. \cite{conn-pd} show that each power dominating set $S$ in $G'$ has a corresponding Model 4 feasible solution in which $S$ is indicated by the $s$ variables. In the case that $G'$ is connected and $\Delta(G') \geq 3$, this model can be modified by including an additional constraint,
{\small \begin{align}
    & s_v = 0 & \forall v \in V : \deg(v) \leq 2 \label{cons:8}.
  \end{align} }
Since Proposition \ref{prop:J-set} ensures that $G'$ has a minimum power dominating set contained in $J(G)$, Model 4 must have an optimal solution in which $s_v = 0$ for all $v$ with $\deg(v) \leq 2$. Thus when Constraint (\ref{cons:8}) is added to Model 4, the modified infection model remains feasible. Additionally, any feasible solution of the modified infection model is feasible for Model 4, so the optimum of the modified model is bounded below by the optimum of Model 4. By incorporating this modification however, the feasible region of both Model 4 and its linear relaxation are often significantly restricted. In the latter half of this section, we provide computational experiments which show that this modification consistently reduces the time needed to solve the model to optimality by 30-50\% in larger problem instances.

Additionally, this constraint can also be added to Model 1. Aside from greatly reducing the number of variables used, the original constraints of Model 1 can be changed from fort neighborhood cover constraints to cover constraints over merely the junctions contained by a fort neighborhood. In large problem instances or graphs with relatively long junction paths (often seen in electrical networks), this can reduce the amount of memory needed to hold the constraint set, further improving runtime performance.

\subsection{Computational Experiments}
The graphs used in these experiments consist of networks from standard IEEE test cases\footnote{Publicly available at the Power Systems Test Case Achieve: \url{https://labs.ece.uw.edu/pstca/}.}, several large networks created by the Pan European Grid Advanced Simulation and State Estimation (PEGASE) project \cite{pegase} meant to simulate the scale and complexity of the European high voltage transmission network, and several publically available power grid data sets including the US Western Interconnection network compiled by Watts and Strogatz~\cite{watts-strogatz}. All of the graphs considered are available for download at the LIINES Smart Grid Test Case Repository~\footnote{The LIINES Smart Grid Test Case Repository and all graphs used can be found at \url{http://amfarid.scripts.mit.edu/Datasets/SPG-Data/index.php.}}.

In order to ensure fair comparisons of the methods, we have implemented or reimplemented all algorithms on the same software platforms and have run all computational experiments using the same hardware. The experiments presented here were implemented using Python 3.5 and Gurobi 7.5.2~\cite{gurobi}, and were run on a laptop with 16GB of RAM and an i7 2.80GHz processor. The constraints of Model 1 were generated using the Gurobi callbacks. 

In Figure \ref{tab:sep-results}, we compare the performance of the power domination set cover model (Model 1) when constraints are generated with Model 2 and Model 3. In all test cases, Model 1 reached optimality more quickly and with fewer added constraints when Model 3 was used. In all experiments the constraint set of Model 1 was initialized via the process described in Section \ref{sec:M-init} and Constraint~(\ref{cons:8}) was included in the model.

In Figure \ref{tab:full-results}, we compare the runtime performance of the infection IP model presented by Brimkov et al. \cite{conn-pd}, the modified infection model discussed in Section \ref{sec:inf-model} (denoted in the table as Inf.*), and Model 1. As before, the constraint set of Model 1 was initialized via the process described in Section \ref{sec:M-init} and Constraint~(\ref{cons:8}) was included. Violated constraints of Model 1 were separated using Model 3.

\renewcommand{\arraystretch}{1.2}
\begin{figure}[H]
  \centering
  \footnotesize
  \[ \begin{tabular}{|c|c|c|c|c|c|c|c|c|c|}
		\hline
 		\multicolumn{10}{|c|}{Comparison of Set Cover Constraint Generation Methods} \\
    \hline
    \multicolumn{6}{|c|}{\multirow{2}{*}{Graph Test Instances}} & \multicolumn{2}{c|}{Set Cover Model} & \multicolumn{2}{c|}{Set Cover Model} \\
    \multicolumn{6}{|c|}{}& \multicolumn{2}{c|}{(using Model 2)} & \multicolumn{2}{c|}{(using Model 3)} \\
 		%% \multicolumn{5}{|c|}{Graph Test Instances} & \multicolumn{2}{|c|}{Set Cover Model with Model 2} & \multicolumn{2}{|c|}{Set Cover Model with Model 3} \\
 		\hline
 	  Name         & $n$  & $m$   & $J$ & $\mathscr{M}$ &$\gamma_P$& Time & Sep. & Time   & Sep. \\
	  \hline %%%%%%%%%%%%%%%%%%%%%%%%%%%%%%%%%%%%%%%%%%%%%%%%%%%%%%%%%%%%%%%%%%%%%%%%%%%%
	  IEEE Bus 14	 & 14   & 20    & 7   & 0 & 2        & 0.042    & 3       &\textbf{0.007} & 2   \\ %\hline
	  IEEE Bus 30  & 30   & 41    & 12  & 1 & 3        & 0.019		 & 4       &\textbf{0.012} & 2   \\ %\hline
	  IEEE Bus 57	 & 57   & 78    & 24  & 1 & 3        & 0.034		 & 6       &\textbf{0.029} & 5   \\ %\hline
	  IEEE Bus 118 & 118  & 179   & 55  & 2 & 8        & 0.19	   & 26      &\textbf{0.12}  & 10  \\ %\hline
	  IEEE Bus 300 & 300  & 409   & 155 & 14 & 30       & 1.09     & 63      &\textbf{0.65}  & 30  \\ %\hline
    PEGASE 1354  & 1354 & 1710  & 496 & 140 & 176      & --       & 4785    &\textbf{1.55}  & 49  \\ %\hline LB: 155
    Polish 2383  & 2383 & 2886  & 776 & 95 & 203      & --       & 14083   &\textbf{52.73} & 14  \\ %\hline LB: 105
    %% \multirow{2}{*}{\shortstack{US Western\\Interconnection}}
                 %% &\multirow{2}{*}{4941}& \multirow{2}{*}{6594}& \multirow{2}{*}{2059}& \multirow{2}{*}{494} & \multirow{2}{*}{--}& \multirow{2}{*}{--} &\multirow{2}{*}{185.00} &\multirow{2}{*}{31} \\
    %% & & & & & & & &      \\ %\hline
    %% PEGASE 9241  & 9241 & 14207 & 3800 & 811 & -- & -- & 995.54   & 299 \\ \hline
    % Process killed after 156647 seconds and had not reached an optimal solution. Gap of 2.57%

    %% \multirow{2}{*}{\shortstack{US Western\\Interconnection}} & \multirow{2}{*}{4941} & \multirow{2}{*}{6594} & \multirow{2}{*}{2059} & \multirow{2}{*}{494}    & \multirow{2}{*}{81504.25} & \multirow{2}{*}{185.00}   & \multirow{2}{*}{31} \\
    \hline
  \end{tabular} \]
  \caption{\footnotesize Runtime performance comparison of the set cover model (Model 1) with constraints generated by Model 2 and Model 3. In both cases, Model 1 is initialized with the process discussed in Section \ref{sec:M-init}. A timeout period of 1 hour was used.}
  \label{tab:sep-results}
\end{figure}
\vspace{-.5cm}
\begin{figure}[H]
  \centering
  \footnotesize
  \[ \begin{tabular}{|c|c|c|c|c|c|c|c|c|c|}
		\hline
 		\multicolumn{10}{|c|}{Comparison with Previous Power Domination Computational Methods} \\
    \hline
 		\multicolumn{6}{|c|}{Graph Test Instances} & \multicolumn{1}{c|}{Inf. \cite{conn-pd}} & \multicolumn{1}{c|}{Inf.*} & \multicolumn{2}{c|}{Set Cover Model} \\
 		\hline
 	  Name         & $n$  & $m$   & $J$  &$\mathscr{M}$&$\gamma_P$& Time & Time & Time & Sep. \\
	  \hline %%%%%%%%%%%%%%%%%%%%%%%%%%%%%%%%%%%%%%%%%%%%%%%%%%%%%%%%%%%%%%%%%%%%%%%%%%%%
	  IEEE Bus 14	 & 14   & 20    & 7    & 0   &  2    & 0.10     & 0.06     & \textbf{0.007}   & 2   \\ %\hline
	  IEEE Bus 30  & 30   & 41    & 12   & 1   &  3    & 0.08		 & 0.08     & \textbf{0.012}	  & 2   \\ %\hline
	  IEEE Bus 57	 & 57   & 78    & 24   & 1   &  3    & 0.34		 & 0.34     & \textbf{0.029}	  & 5   \\ %\hline
%% RTS-96			            & 73   & 108   & 43 	& 6        & 1.98		  & 0.14	    & 7   \\
	  IEEE Bus 118 & 118  & 179   & 55   & 2   &  8    & 3.97	   & 2.71     & \textbf{0.12}	  & 10  \\ %\hline
	  IEEE Bus 300 & 300  & 409   & 155  & 14  &  30   & 64.64    & 20.74    & \textbf{0.65}	  & 30  \\ %\hline
    PEGASE 1354  & 1354 & 1710  & 496  & 140 &  176  & 10.87    & 4.82     & \textbf{1.55}    & 49  \\ %\hline
    Polish 2383  & 2383 & 2886  & 776  & 95  &  203  & 804.31   & 482.52   & \textbf{52.73}   & 14  \\ %\hline
    \multirow{2}{*}{\shortstack{US Western\\Interconnection}}
                 &\multirow{2}{*}{4941}& \multirow{2}{*}{6594}& \multirow{2}{*}{2059}& \multirow{2}{*}{321} & \multirow{2}{*}{494} & \multirow{2}{*}{--}& \multirow{2}{*}{--} &\multirow{2}{*}{\textbf{185.00}} &\multirow{2}{*}{31} \\
    & & & & & & & & &     \\ %\hline
    PEGASE 9241  & 9241 & 14207 & 3800 & 527 & 811 & -- & -- & \textbf{995.54}   & 299 \\ \hline
    % Process killed after 156647 seconds and had not reached an optimal solution. Gap of 2.57%

    %% \multirow{2}{*}{\shortstack{US Western\\Interconnection}} & \multirow{2}{*}{4941} & \multirow{2}{*}{6594} & \multirow{2}{*}{2059} & \multirow{2}{*}{494}    & \multirow{2}{*}{81504.25} & \multirow{2}{*}{185.00}   & \multirow{2}{*}{31} \\
  \end{tabular} \]
  \caption{\footnotesize Runtime performance comparison of the infection model proposed by Brimkov et al. \cite{conn-pd} (Model 4), the modified infection model, and the set cover model with constraints generated using Model 3. A timeout period of 6 hours was used.}
  \label{tab:full-results}
\end{figure}

{\let\thefootnote\relax\footnote{{$J$ is used to denote $|J(G)|$, $\mathscr{M}$ is used to denote the number of constraints initially included in Model 1, and the Sep. column provides the number of constraints which were separated in each test case. All tests are reported in seconds, the best time for each test case is in bold.}}}

%% \blfootnote{$J$ is used to denote $|J(G)|$, $\mathscr{M}$ is used to denote the number of constraints initially included in Model 1, and the Separations column provides the number of constraints which were separated in each test case.}
\renewcommand{\arraystretch}{1}

In the larger test cases, the majority of Model 1's constraints are added during the initialization process. When Constraint (\ref{cons:8}) is added to Model 1, the initial constraints introduce either required vertices (fort neighborhoods of type I and II) or cover constraints over two vertices (fort neighborhoods of type III). As a result, the initially added constraints can significantly reduce the dimension of the Model 1 feasible region.

\section{Conclusions}
In this paper, we study the power dominating set problem from the perspective of zero forcing forts and their closed neighborhoods. Our main contribution is a novel computational method for computing the power domination number of general graphs and identifying minimum cardinality power dominating sets. Our method is an IP formulation which, unlike the existing computational methods for power domination, can be solved via a constraint generation strategy. To define this method and verify its correctness, we provide an equivalent description of power dominating sets in the context of fort neighborhoods, introduce a useful vertex partitioning scheme, and prove several structure properties of fort neighborhoods. We also provide a discussion of the computational complexity of identifying minimum weight fort neighborhoods and show that the minimum weight fort neighborhood decision problem is NP-Complete if even one vertex is known to be in the fort neighborhood.

Additionally, we give provide computational experiments demonstrating the significantly improved runtime performance of the proposed method. In these experiments, the proposed algorithm was able to determine the power dominating number of both the 4941 node US Western Interconnection graph and the 9241 node PEGASE 9241 test case, which were both previously unknown, to the best knowledge of the authors, and are an order of magnitude larger than the test cases previously seen in power domination literature. Moreover, the techniques presented in this paper can be easily extended to many other dynamic graph processes. Similar to power domination and zero forcing phenomena, graph coloring rules have emerged in other applications including $k$-thresholding problems which model disease and opinion transfer~\cite{thresholding1, dreyer} and target set selection problems arising in optimal advertising resource allocation~\cite{tss2, tss1}. For these problems, structures similar to forts or fort neighborhoods which impede propagation could be identified and IPs similar to Model 1 could be formulated. 

We conclude this section by discussing several possible directions of future work. First, it may be possible to improve the overall performance of our method by generating Model 1's constraints with a polynomial time complexity heuristic. While the power dominating set problem is known to be NP-Complete, violated constraints are identified through the use of an auxiliary IP. In many constraint generation algorithms, effective violated constraints can be quickly generated through the use of heuristic algorithms, benefiting runtime performance. Second, the initial set of fort neighborhood constraints used in the set cover model was generated through the identification of fort neighborhoods with special structures. These special types of fort neighborhoods chosen as they could be quickly identified and seem to occur regularly in electrical network graphs. Other regularly occurring structures could be similarly exploited. In addition, power domination has also been recently studied in the context of special graphs arising in chemistry \cite{pdchem2, pdchem1}. In new problem domains it is likely that other special structures will be relevant and different initialization techniques will be needed.

Another direction of potential future work is the determination of the computational complexity of the subproblems solved by Model 2 and 3. While we believe that these subproblems are indeed NP-Hard to solve in general graphs, it has not yet been shown. A recent complexity result posed by Shitov~\cite{shitov} implies that the minimum cardinality fort problem is NP-Hard; it may be possible to extend these results to fort neighborhoods. Finally, Model 3 generates violated fort neighborhood constraints with minimum cardinality vertex sets. This ensures that the constraints Model 3 generates satisfy half of the necessary and sufficient conditions for cover constraints to be facets, as proposed by Balas and Ng~\cite{balas-ng}. It is likely possible that Model 3 can be extended to generate facet inducing violated inequalities for the master problem, and a similar approach was taken by Fast~\cite{fast-thesis} for the zero forcing set problem. \newline

\section*{Acknowledgments}
This work was supported by the National Science Foundation, under Grant DMS-1720225.

\bibliographystyle{plain}
\bibliography{main.bib}

\end{document}